\documentclass[lettersize,journal,twoside]{IEEEtran}
\IEEEoverridecommandlockouts
\usepackage{cite}
\usepackage{amsthm}
\usepackage{bm}
\usepackage{amsmath,amssymb,amsfonts}
\usepackage{graphicx} 
\usepackage{float} 
\usepackage{epstopdf}
\usepackage{caption}
\usepackage{algorithmic}
\usepackage{textcomp}
\usepackage{xcolor}
\usepackage{subfigure}
\usepackage[justification=centering]{caption}

\usepackage{setspace} 

\usepackage{booktabs}
\usepackage{multirow}
\usepackage{tabularx}
\usepackage{longtable}
\usepackage{supertabular}
\def\BibTeX{{\rm B\kern-.05em{\sc i\kern-.025em b}\kern-.08em
    T\kern-.1667em\lower.7ex\hbox{E}\kern-.125emX}}

\newtheorem{definition}{Definition}
\newtheorem{theorem}{Theorem}
\newtheorem{corollary}{Corollary}
\newtheorem{lemma}{Lemma}

\newcommand{\sx}{\mkern-5mu}

\setstretch{0.9}

\begin{document}

\title{A Novel Design Method for Seeking Sparse Linear Arrays With Low Redundancy and Enhanced DOF }
\author{Si Wang and Guoqiang Xiao
\thanks{
Manuscript received 22 March 2025.
This work was supported in part by the National Natural Science Foundation of China under Grant 62371400.
The associate editor coordinating the review of this manuscript and approving it for publication was xxx.
 (Corresponding author: Guoqiang Xiao.)}
\thanks{Si Wang and Guoqiang Xiao are with the College of Computer and Information Science, Southwest University, Chongqing 400715, China
 (e-mail:wangsi@swu.edu.cn; gqxiao@swu.edu.cn).}
}

\markboth{ A Novel Design Method for Seeking Sparse Linear Arrays With Low Redundancy and Enhanced DOF,~Vol.~xx, No.~x, March~2025}%
{WNAG \MakeLowercase{\textit{et al.}}: A Novel Design Method for Seeking Sparse Linear Arrays With Low Redundancy and Enhanced DOF}

\maketitle

\begin{abstract}

Sparse arrays with $N$-sensors can provide up to $O(N^2)$ degrees of freedom (DOF) by second-order cumulants.
However, these sparse arrays like minimum-/low-redundancy arrays (MRAs/LRAs), nested arrays and coprime arrays
can only provide limited DOF and array aperture with the same number of physical sensors.
However, further increasing DOF would increase costs in practical applications.
The paper aims to design a sparse linear array (SLA) with higher DOF and lower redundancy via exploring different cases of third-order cumulants.
Based on the framework third-order exhaustive co-array (TO-ECA), a general third-order array (GTOA) with any generator is proposed in the paper.
Further, three novel arrays are designed based on GTOA with different generators,
namely third-order sum and difference array (generator) (TO-SDA(CNA)), (TO-SDA(SCNA)) and (TO-SDA(TNA-II))
which can provide closed-form expressions for the sensor locations
and enhance DOF in order to resolve more signal sources in the estimation of direction of arrival (DOA).
The three TO-SDAs are all consisted of two sub-arrays,
where the first is the generator and another is a ULA with big inter-spacing between sensors.
For the three TO-SDAs, the maximum DOF under the given number of total physical sensors can be derived
and the TO-ECA of the three TO-SDAs are hole-free.
Additionally, the redundancy of the three TO-SDAs is defined,
and the lower band of the redundancy for the three TO-SDAs is derived.
Furthermore, the proposed TO-SDA(TNA-II) not only achieves higher DOF than those of existing TONA and even SE-FL-NA
but also reduces mutual coupling effects.
Meanwhile it realizes higher resolution and decreases redundancy.
Numerical simulations are conducted to verify the superiority of TO-SDA(TNA-II) on DOA estimation performance and enhanced DOF over other existing DCAs.

\end{abstract}


\begin{IEEEkeywords}
    Sparse array, third-order cumulants, redundancy, mutual coupling, resolution, direction of arrival estimation.
\end{IEEEkeywords}

\section{Introduction}

\IEEEPARstart{L}{ow} cost sampling in intelligent perception is widely applied in many fields such as frequency estimation in the time domain \cite{Xiao2017notes}, \cite{Xiao2018robustness}, \cite{Xiao2016symmetric}, \cite{Xiao2021wrapped}, \cite{Xiao2023on}
and DOA estimation in the spatial domain.
This paper mainly focuses on DOA estimation in array signal processing,
which is a fundamental problem studied for several decades \cite{Krim1996}, \cite{Godara1997}, \cite{Tuncer2009}, \cite{Xiao22023}.
It is well known that a uniform linear array (ULA) with $N$-sensors is used to estimate ($N-1$) sources using
DOA estimation methods such as MUSIC \cite{Schmidt1986} or ESPRIT \cite{Roy1989}.
To increase the DOF of ULA, more sensors are required, thus leading to a higher cost in practical applications.
However, nonuniform linear arrays (also known as SLAs ) offer an effective solution to these problems.

For example, minimum redundancy array (MRA) \cite{Moffet1968} is a foundational structure in order to obtain as large DOF as possible
by reducing redundant sensors.
However, the non-closed form expressions for sensor positions hinder MRA's scalability and complicate large-scale array design.
This limitation leads to extensive research on nested arrays \cite{Pal2010} and coprime arrays \cite{Pal2011},
which offer significant advantages with their closed-form expressions for the sensor locations.
The success of nested arrays and coprime arrays inspires further developments aimed at enhancing DOF,
including augmented coprime arrays \cite{Pal22011},
enhanced nested arrays \cite{Zhao2019} and arrays based on the maximum element spacing criterion \cite{Zheng2019}.
With the respect of DOA estimation,
traditional subspace-based methods \cite{Liu2015} only utilize the consecutive lags of the DCA, making a hole-free configuration advantageous.
Consequently, hole-filling strategies have been proposed to create new coprime arrays-like with a hole-free DCA \cite{Wang2019}.
Obviously, various DCAs based on SOC have been widely studied in DOA estimation because of its significantly enhanced DOF \cite{Pal2010}, \cite{Pal2011}, \cite{Zhao2019}, \cite{Zheng2019}, \cite{Shi2022}.

Moreover, exploiting third-order cumulant (TOC) \cite{Sharma2023} of SLAs
can further enhance DOF \cite{Xiao2023}, starting from a mathematical model perspective.
Specifically, the corresponding third-order difference co-array (TODCA) of SLA can be obtained by calculating the TOC of the signals received based on SLA separately.
At this time, an $N$-sensors sparse array provides $\mathcal{O}(N^3)$ consecutive lags for TODCAs.

However, on the one hand, the TODCA requires an odd number of physical sensors,
which makes it more difficult to determine the physical sensor positions during array design.
Therefore, there are few references focus on the exploration of TODCA.
On the other hand, although SLAs designed based on TODCA can greatly increase the number of  consecutive lags,
using a single TOC case to design SLAs still limits the increase in the number of consecutive lags for SLAs.
Therefore, in order to increase the number of consecutive lags for virtual array obtained by using TOC to a greater extent,
a general third-order array (GTOA) with a generator is proposed based on the TO-ECA in the paper with combining four cases of TOC for received signals.
Furthermore, based on GTOA, we propose three third-order sum and difference co-array (TO-SDCA) with different generators and closed form expressions of sensor positions,
which are consisted of two sub-arrays.

In addition, the number of resolvable sources in DOA estimation is greatly affected by the DOF\cite{Piya2012}, \cite{LiuCL2016},
which are commonly adopted as indicator for quantitative evaluation and performance optimization of designing SLAs \cite{Shen2016}, \cite{Shen2019},\cite{Shen2015}.
Furthermore, the paper uses different cases of TOC to design a SLA structure with enhanced DOF, lower redundancy and smaller coupling effects.
Therefore, we first introduce the criterion of forming more consecutive lags of designing SLA as follows.

\emph{Criterion 1 (Large consecutive lags of DCA)}:
The large consecutive lags of DCA are preferred, which can not only increase the number of resolvable sources
but also lead to higher spatial resolution in the DOA estimation \cite{Pal2010}, \cite{Piya2012}, \cite{Shen2019}.

\emph{Criterion 2 (Closed-form expressions of antenna positions)}:
A closed-form expression of antenna positions is preferred for scalability considerations \cite{Cohen2019}, \cite{Cohen2020}.

\emph{Criterion 3 (Hole-free DCA)}:
A SLA with a hole-free DCA is preferred,
since the data from its DCA can be utilized directly by subspace-based DOA estimation methods
which are easy to be implemented, and thus the algorithms based on compressive sensing \cite{Shen2015}, \cite{Shen20152}, \cite{Shen2017} or co-array interpolation techniques \cite{Cui2019}, \cite{Zhou2018} with increased computational complexity can be avoided \cite{Cohen2020}, \cite{Liu20172}.

\emph{Contribution:}
This paper focuses on the design of SLA in order to get hole-free TO-ECA based on TOC with enhance DOF.
The main contributions of the paper are threefold.\par
$\bullet$ A general third-order array (GTOA) with any generator is systematically designed based on TO-ECA,
utilizing any array combining another array.
For the proposed GTOA with the given number of physical sensors,
the closed-form expressions of the physical sensor positions have been derived analytically in this paper.
\par
$\bullet$ Three novel TO-SDAs with different generators are systematically designed based on TO-ECA,
which maximizes the DOF of TO-SDAs to enhance resolving the number of sources.
For the three proposed TO-SDAs with the given number of physical sensors,
the closed-form expressions of the physical sensor positions have been derived analytically,
and the DOF of three TO-SDAs are further enhanced by improving the configuration of the physical sensors among the two ULAs.
Consequently, the three proposed TO-SDAs offers significantly higher DOF than those of other existing similar SLAs \cite{Guo2024}, \cite{Piya2012}, \cite{Shen2019}, \cite{Yang2023}.
\par
$\bullet$ The redundancy of three TO-SDAs is defined in the paper,
which is the indicator for comparing array structure performance.
Further, designing a novel array with even lower redundancy than that of SE-FL-NA.

The paper is organized as follows. In Section II, we briefly introduce the general sparse array model.
And the third-order exhaustive co-array is reviewed in Section III.
In Section IV, we describe how to design the GTOA with any generator
Furthermore, the TO-SDAs with three different generators designed based on TO-ECA are proposed in Section V,
which provide closed-form expressions for physical sensor locations by using SCA and DCA.
Additionally, we explain that the TO-ECA of the three proposed TO-SDAs is hole-free and calculate the corresponding maximum DOF.
And the redundancy of TO-SDA are defined in Section VI.
In Section VII several numerical simulations are presented to evaluate the RMSE of the three TO-SDAs
compared to TONA, FL-NA and SE-FL-NA with respect to SNR, snapshots and the number of sources.

\textit{Notations:}
$\mathbb{S}$ is the physical sensor positions set of a SLA.
$N$ is the number of antennas. $D$ is the number of source signals to be estimated. $K$ is the number of snapshots.
$\Phi$ is sensor positions set of a TO-ECA.
The operators $\otimes$, $\odot$, $(\cdot)^T$, $(\cdot)^H$ and $(\cdot)^*$ stand for the Kronecker products,
Khatri-Rao products,
transpose, conjugate transpose and complex conjugation, respectively.
Set $\{a : b : c \}$ denotes the integer line from $a$ to $c$ sampled in steps of $b\in \mathbb{N}^+$.
When $b=1$, we use shorthand $\{a : c \}$ .

\section{Preliminaries}

\subsection{General Sparse Array Model}

Assume that there are $D$ non-Gaussian and mutually uncorrelated far-field narrow band signals.
The incident angle of the $i^{th}$ signal is $\theta_i$,
and the physical sensor positions set of the SLA is represented as $\mathbb{S}=\{ p_1,p_2,...,p_N \}\cdot d$,
where the unit spacing $d$ is generally set to half wavelength.
The array output at the $n^{th}$ physical sensor corresponding to the $t^{th}$ snapshot,
denoted as $x_n(t)$, can be expressed as follows
\begin{equation}
\label{w1}
\begin{aligned}
x_n(t)=\sum_{d=1}^Da_n(\theta_d)s_d(t)+n_n(t),
\end{aligned}
\end{equation}
where $a_n(\theta_i)$ denotes the steering response of $n^{th}$ physical sensor corresponding to the $i^{th}$ source signal,
which can be expressed as follows
\begin{equation}
\label{w8}
a_n(\theta_d)=e^{j\frac{2\pi p_{l_n} d}{\lambda}sin(\theta_d)}.
\end{equation}

Further, $n_n(t)$ denotes a zero-mean additive Gaussian noise sample at the $n^{th}$ physical sensor, which is assumed to be
statistically independent of all the source signals. Thus, the received signals for all the $N$ physical sensors, which are denoted as
$\boldsymbol{x}(t)=[x_1(t),...,x_N(t)]^T$, can be written as follows
\begin{equation}
\label{w2}
\boldsymbol{x}(t)=\sum_{d=1}^D\boldsymbol{a}(\theta_d)s_d(t)+\boldsymbol{n}(t)=\boldsymbol{A}(\theta)\boldsymbol{s}(t)+\boldsymbol{n}(t),
\end{equation}
where $\boldsymbol{s}(t)=[s_1(t),...,s_D(t)]^T$ denotes the source signals vector,
$\boldsymbol{n}(t)=[n_1(t),...,n_N(t)]^T$ denotes the additive Gaussian noise vector,
and $\boldsymbol{a}(\theta_i)=[a_1(\theta_i),...,a_N(\theta_D)]^T$ denotes the array steering vector corresponding
to the $i^{th}$ source signal and $\boldsymbol{A}(\theta)=[\boldsymbol{a}(\theta_1),...,\boldsymbol{a}(\theta_D)]\in\mathbb{C}^{N\times D}$
denotes the array manifold matrix. Setting $p_1 = 0$, $\boldsymbol{a}(\theta_i)$ can be written as
\begin{equation}\nonumber
\boldsymbol{a}(\theta_d)=[1,e^{-j\frac{2\pi p_2d\sin\theta_d}{\lambda}},...,e^{-j\frac{2\pi p_Nd\sin\theta_d}{\lambda}}]^T.
\end{equation}

For $K$ numbers of snapshots, (\ref{w2}) can be rewritten in matrix form as follows
\begin{equation}
\boldsymbol{X}=\boldsymbol{A}\boldsymbol{S}+\boldsymbol{N},
\end{equation}
where $\boldsymbol{X}=[\boldsymbol{x}(1),..., \boldsymbol{x}(D)] \in \mathbb{C}^{N \times D}$ is the received signals matrix,
$\boldsymbol{S}=[\boldsymbol{s}(1),..., \boldsymbol{s}(D)] \in \mathbb{C}^{D \times K}$ is the sources signals
matrix, and $\boldsymbol{N}=[\boldsymbol{n}(1),..., \boldsymbol{n}(D)] \in \mathbb{C}^{N \times K}$ is the additive Gaussian noise matrix.

\subsection{Mutual Coupling Model}
There exists the mutual coupling effects among the physical sensors in practical applications.
When considering the effect of mutual coupling,
the received signal vector in (\ref{w2}) can be rewritten as
\begin{equation}
\label{wang5}
\boldsymbol{x}(t)=\boldsymbol{C}\boldsymbol{A}(\theta)\boldsymbol{s}(t)+\boldsymbol{n}(t),
\end{equation}
where $\boldsymbol{C}$ is the $N\times N$ mutual coupling matrix.
And, the expression for $\boldsymbol{C}$ is rather complicated generally \cite{LiuCL2016,LiuJ2017}. In the ULA configuration, $\boldsymbol{C}$ can be approximated by a
B-banded symmetric Toeplitz matrix as follows\cite{Friedlander1991}, \cite{Svantesson19991}, \cite{Svantesson19992}
\begin{equation}
\boldsymbol{C}(n_1,n_2)=
\begin{cases}
c_{|n_1-n_2|},\ \ \ \ if\ |n_1-n_2|\leq B,\\
0,\ \ \ \ \ \ \ \ otherwise,
\end{cases}
\end{equation}
where $n_1,n_2\in\mathbb{S}$, and $c_0,c_1,...,c_B$ are coupling coefficients satisfying $c_0=1>|c_1|>|c_2|>...>|c_B|$ \cite{Friedlander1991}.
For a given array with N-sensors, the coupling leakage is defined as the energy ratio \cite{Zheng2019}
\begin{equation}
\label{w24}
L=\frac{\|\boldsymbol{C}-diag(\boldsymbol{C}) \|_F}{\|\boldsymbol{C} \|_F},
\end{equation}
a small value of $L$ implies that the mutual coupling is less significant.

In addition, to lighten the notations, given any two sets $\mathbb{S}$ and $\mathbb{S}'$ \cite{Xiao2023}, we use
\begin{equation}\nonumber
\begin{aligned}
&\mathbb{C}(\mathbb{S},\mathbb{S}')=\{ p_i+p_j\ | \ p_i\in \mathbb{S},p_j\in \mathbb{S}' \},\\
&\mathbb{C}(\mathbb{S},\mathbb{S}',\mathbb{S}'')=\{ p_i+p_j+p_k\ | \ p_i\in \mathbb{S},p_j\in \mathbb{S}',p_k\in \mathbb{S}'' \},
\end{aligned}
\end{equation}
to denote the cross sum of elements from $\mathbb{S}$, $\mathbb{S}'$ and $\mathbb{S}''$.

\section{Review of Third-Order Exhaustive Co-Array}
Higher-order cumulants belong to higher-order statistics \cite{Yang2023}, which are used to describe the mathematical characteristics of stochastic processes.
When the received signals is not Gaussian, relying solely on second-order statistics cannot fully capture the statistical characteristics of the signals.
Therefore, we use higher-order cumulants, specifically TOC, to perform DOA estimation. For the zero-mean random process $\boldsymbol{x}(t)$,
the calculation expression for its TOC is as follows
\begin{equation}\nonumber
\begin{aligned}
\mathcal{C}_{3,\boldsymbol{x}}(\small{l_1 ,l_2 ,l_3})\sx =\sx cum\{ \boldsymbol{x}_{l_1}(t) ,\boldsymbol{x}_{l_2}(t) ,\boldsymbol{x}_{l_3}(t) \}\sx =\sx E\{\small{\boldsymbol{x}_{l_1}(t)\boldsymbol{x}_{l_2}(t)\boldsymbol{x}_{l_3}(t)}\},
\end{aligned}
\end{equation}
where $E\{\cdot\}$ represents the expectation operator. If the random process $\boldsymbol{x}(t)$ is a zero-mean Gaussian process, its TOC is equal to zero.

\subsection{Review of TO-ECA}
The TOC of received signal vector $\boldsymbol{x}(t)$ in (\ref{w2}) can be represented as the following four cases \cite{Sharma2023}
\begin{equation}
\label{eq1}
\begin{aligned}
&\mathcal{C}_{\boldsymbol{x}}^{(1)}\triangleq cum\{\boldsymbol{x}(t),\boldsymbol{x}(t),\boldsymbol{x}(t)\}\in\mathbb{C}^{N\times N\times N}\\[-3pt]
&=\sum\limits_{i=1}^{\small{D}}(\boldsymbol{a}(\theta_i)\otimes\boldsymbol{a}(\theta_i)\otimes\boldsymbol{a}(\theta_i))E\{ s_i(t)s_i(t)s_i(t) \},\\[-3pt]
&\mathcal{C}_{\boldsymbol{x}}^{(2)}\triangleq cum\{\boldsymbol{x}(t),\boldsymbol{x}(t),\boldsymbol{x}^*(t)\}\in\mathbb{C}^{N\times N\times N}\\[-3pt]
&=\sum\limits_{i=1}^D(\boldsymbol{a}(\theta_i)\otimes\boldsymbol{a}(\theta_i)\otimes\boldsymbol{a}^*(\theta_i))E\{ s_i(t)s_i(t)s_i^*(t) \},\\[-3pt]
&\mathcal{C}_{\boldsymbol{x}}^{(3)}\triangleq cum\{\boldsymbol{x}^*(t),\boldsymbol{x}^*(t),\boldsymbol{x}(t)\}\in\mathbb{C}^{N\times N\times N}\\[-3pt]
&=\sum\limits_{i=1}^D(\boldsymbol{a}^*(\theta_i)\otimes\boldsymbol{a}^*(\theta_i)\otimes\boldsymbol{a}(\theta_i))E\{ s_i^*(t)s_i^*(t)s_i(t) \},\\[-3pt]
&\mathcal{C}_{\boldsymbol{x}}^{(4)}\triangleq cum\{\boldsymbol{x}^*(t),\boldsymbol{x}^*(t),\boldsymbol{x}^*(t)\}\in\mathbb{C}^{N\times N\times N}\\[-3pt]
&=\sum\limits_{i=1}^D(\boldsymbol{a}^*(\theta_i)\otimes\boldsymbol{a}^*(\theta_i)\otimes\boldsymbol{a}^*(\theta_i))E\{ s_i^*(t)s_i^*(t)s_i^*(t) \}.
\end{aligned}
\end{equation}

The TOCs in (\ref{eq1}) eliminate the corresponding noise $\boldsymbol{n}(t)$,
as the cumulants of order greater than 2 are identically zero for Gaussian random processes.
The vectorization of TOC $\mathcal{C}_{\boldsymbol{x}}^{(j)}$ generates the corresponding column vector $\boldsymbol{c}_{\boldsymbol{x}}^{(j)}\in\mathbb{C}^{N^3\times1}, j\in\{1,2,3,4\}$,
as follows
\begin{equation}
\label{wh1}
\begin{aligned}
\boldsymbol{c}_{\boldsymbol{x}}^{(j)}=vec(\mathcal{C}_{\boldsymbol{x}}^{(j)})=\sum_{i=1}^Db^{(j)}(\theta_i)p_{s_i}^{(j)}=\boldsymbol{B}^{(j)}\boldsymbol{p}_{\boldsymbol{s}}^{(j)},
\end{aligned}
\end{equation}
where    $\boldsymbol{b}^{(j)}(\theta_i)\in \mathbb{C}^{N^3\times1}$, $\boldsymbol{B}^{(j)}\in\mathbb{C}^{N^3\times D}$,
$p_{s_i}^{(j)}\in\mathbb{C}$, $\boldsymbol{p}_{\boldsymbol{s}}^{(j)}\in \mathbb{C}^{D\times 1}$ .

Further, a new TO-ECA can be derived as follows by combining four $\boldsymbol{c}_{\boldsymbol{x}}^{(j)},j\in\{1,2,3,4\}$.

\begin{equation}
\label{to3}
\boldsymbol{c}_{\boldsymbol{x}}=[{\boldsymbol{c}_{\boldsymbol{x}}^{(1)}}^T,{\boldsymbol{c}_{\boldsymbol{x}}^{(2)}}^T,{\boldsymbol{c}_{\boldsymbol{x}}^{(3)}}^T, {\boldsymbol{c}_{\boldsymbol{x}}^{(4)}}^T]^T
\triangleq\boldsymbol{B}\boldsymbol{p}_s\in\mathbb{C}^{4N^3\times1},
\end{equation}
where the equivalent source signal vector $\boldsymbol{p}_s$ is expressed as follows

\begin{equation}
\label{to4}
\boldsymbol{p}_s\triangleq[{\boldsymbol{p}_s^{(1)}}^T,{\boldsymbol{p}_s^{(2)}}^T,{\boldsymbol{p}_s^{(3)}}^T, {\boldsymbol{p}_s^{(4)}}^T]^T\in\mathbb{C}^{4D\times1},
\end{equation}
and the equivalent array manifold matrix $\boldsymbol{B}$ is
\begin{equation}
\label{w21}
\boldsymbol{B}= \left(
               \begin{matrix}
                 \boldsymbol{B}^{(1)} & 0                    & 0                    &0\\
                 0                    & \boldsymbol{B}^{(2)} &0                     &0\\
                 0                    & 0                    &\boldsymbol{B}^{(3)}  &0\\
                 0                    & 0                    &0                     &\boldsymbol{B}^{(4)}\\
               \end{matrix}
             \right),
\end{equation}
where the specific expression of $\boldsymbol{B}^{(j)},\{j=1,2,3,4\}$ in (\ref{w21}) is as follows
\begin{equation}
\label{w5}
\begin{aligned}
&\boldsymbol{B}^{(j)}\triangleq[\boldsymbol{b}^{(j)}(\theta_1),\boldsymbol{b}^{(j)}(\theta_2),...,\boldsymbol{b}^{(j)}(\theta_D)],\\
&\boldsymbol{b}^{(j)}(\theta_i)\triangleq[{b}^{(j)}_1(\theta_i),{b}^{(j)}_2(\theta_i),...,{b}^{(j)}_{N^3}(\theta_i)]^T,(i=1,2,...,D)\\
&\ \ \ \ \ \ =\begin{cases}
\boldsymbol{a}(\theta_i)\otimes\boldsymbol{a}(\theta_i)\otimes\boldsymbol{a}(\theta_i), j=1,\\
\boldsymbol{a}(\theta_i)\otimes\boldsymbol{a}(\theta_i)\otimes\boldsymbol{a}^*(\theta_i),j=2,\\
\boldsymbol{a}^*(\theta_i)\otimes\boldsymbol{a}^*(\theta_i)\otimes\boldsymbol{a}(\theta_i), j=3,\\
\boldsymbol{a}^*(\theta_i)\otimes\boldsymbol{a}(\theta_i)^*\otimes\boldsymbol{a}^*(\theta_i),j=4.
\end{cases}
\end{aligned}
\end{equation}

The TOCs for the four cases of the source are as follows
\begin{equation}
\label{w4}
\begin{aligned}
&\boldsymbol{p}_{\boldsymbol{s}}^{(j)}\triangleq[C_{3,s_1(t)}^{(j)},C_{3,s_2(t)}^{(j)},...,C_{3,s_D(t)}^{(j)}]^T,\\
&\begin{matrix}
C_{3,s_i(t)}^{(j)}=\\
(i=1,2,...,D)       \\
\end{matrix}
\begin{cases}
E\{s_i(t)s_i(t)s_i(t)\},j=1,\\
E\{s_i(t)s_i(t)s_i^*(t)\},j=2,\\
E\{s_i^*(t)s_i^*(t)s_i(t)\},j=3,\\
E\{s_i^*(t)s_i^*(t)s_i^*(t)\},j=4.
\end{cases}
\end{aligned}
\end{equation}

For $\boldsymbol{c}_{\boldsymbol{x}}^{(j)}=\boldsymbol{B}^{(j)}\boldsymbol{p}_{\boldsymbol{s}}^{(j)}, j\in\{1,2,3,4\}$ given in (\ref{to3}),
it can be observed that $\boldsymbol{c}_{\boldsymbol{x}}^{(j)}$ is the result of vectorizing the TOC $\mathcal{C}_{\boldsymbol{x}}^{(j)}$.
Consequently, four virtual co-arrays under the four different cases can be obtained from
$\boldsymbol{c}_{\boldsymbol{x}}^{(j)}$,
namely first third-order co-array (TOCA$_1$), second third-order co-array (TOCA$_2$), third third-order co-array (TOCA$_3$) and fourth third-order co-array (TOCA$_4$),
which are defined as follows.

For different cases, we can get the elements of $\boldsymbol{b}^{(j)}(\theta_i), (j=1,2,3,4)$ from (\ref{w5}) as follows
\begin{equation}
 \label{w11}
 \begin{aligned}
 &b^{(1)}_{N^2(l_1-1)+N(l_2-1)+l_3}(\theta_i)=e^{j\frac{2\pi d}{\lambda}(p_{l_1}+p_{l_2}+p_{l_3})\sin(\theta_i)},\\
 &b^{(2)}_{N^2(l_1-1)+N(l_2-1)+l_3}(\theta_i)=e^{j\frac{2\pi d}{\lambda}(p_{l_1}+p_{l_2}-p_{l_3})\sin(\theta_i)},\\
 &b^{(3)}_{N^2(l_1-1)+N(l_2-1)+l_3}(\theta_i)=e^{j\frac{2\pi d}{\lambda}(-p_{l_1}-p_{l_2}+p_{l_3})\sin(\theta_i)},\\
 &b^{(4)}_{N^2(l_1-1)+N(l_2-1)+l_3}(\theta_i)=e^{j\frac{2\pi d}{\lambda}(-p_{l_1}-p_{l_2}-p_{l_3})\sin(\theta_i)}.
 \end{aligned}
 \end{equation}

Compared with the steering response $a_n(\theta_i)\sx =\sx e^{j\frac{2\pi p_{l_n} d}{\lambda}\sin(\theta_i)}$ for ULAs,
(\ref{w11}) implys the steering response of sensor located at $(p_{l_1}+p_{l_2}+p_{l_3}) d$ for TOCA$_1$, $(p_{l_1}+p_{l_2}-p_{l_3}) d$ for TOCA$_2$,
$(-p_{l_1}-p_{l_2}+p_{l_3}) d$ for TOCA$_3$ and $(-p_{l_1}-p_{l_2}-p_{l_3}) d$ for TOCA$_4$, respectively.
Consequently, the TOCA$_j, (j=1,2,3,4)$ derived from the vector $\boldsymbol{c}_{\boldsymbol{x}}^{(j)}$ in (\ref{to3})
can be considered as the virtual array, which is defined as follows.

\begin{definition}
(TOCA$_j$): For a linear array with N-sensors located at positions given by the set $\mathbb{S}$, a multiset $\Phi_j$ is defined as follows
\begin{equation}
\begin{aligned}
\label{w28}
&\Phi_1\triangleq\{(p_{l_1}+p_{l_2}+p_{l_3})d\ | \ l_1,l_2,l_3=1,2,...,N\},\\
&\Phi_2\triangleq\{(p_{l_1}+p_{l_2}-p_{l_3})d\ | \ l_1,l_2,l_3=1,2,...,N\},\\
&\Phi_3\triangleq\{(-p_{l_1}-p_{l_2}+p_{l_3})d\ | \ l_1,l_2,l_3=1,2,...,N\},\\
&\Phi_4\triangleq\{(-p_{l_1}-p_{l_2}-p_{l_3})d\ | \ l_1,l_2,l_3=1,2,...,N\},
\end{aligned}
\end{equation}
where the multiset $\Phi_j$ allows repetitions, and has an underlying set $\Phi_j^{u}$ that contains the unique elements of $\Phi_j$.
Consequently, TOCA$_j$ is defined as the virtual linear array for the four cases,
where the sensors are located at positions given by the set $\Phi_j^u$.
\end{definition}

Furthermore, $\boldsymbol{c}_{\boldsymbol{x}}\in\mathbb{C}^{4N^2\times1}$ in (\ref{to3}) is obtained by
combing four $\boldsymbol{c}_{\boldsymbol{x}}^{(j)}, j\in\{1,2,3,4\}$.
It is equivalent to the cumulants of the signals received in a single snapshot by the constructed virtual linear array,
which is the combination of all four possible co-arrays for $j\in \{1, 2, 3, 4\}$, namely TOCA$_1$, TOCA$_2$, TOCA$_3$ and TOCA$_4$.
Therefore, the derived virtual linear array is the TO-ECA,
and to obtain the sensor position set of TO-ECA, we first introduce the following knowledge about multiset.

A multiset $\Phi$ is defined as the multiset-sum (bag sum) of the four multisets $\Phi= \Phi_1 + \Phi_2 + \Phi_3+\Phi_4$,
which denotes the union operation of the set subjected to sets with duplicate elements, and the specific information is in \cite{Wang2024}.
On the contrary, for the multiset $\Phi$, there exists an set $\Phi^u=\Phi_1^u \cup \Phi_2^u \cup \Phi_3^u\cup \Phi_4^u$ that contains the unique elements.

After obtaining the sensor position set of TO-ECA, the TO-ECA is defined as follows.
\begin{definition}
(TO-ECA):
For a linear array of N-sensors located at positions given by the set $\mathbb{S}$,
the TO-ECA is derived based on the TOCs, whose sensors are located at the set $\Phi^u$.
\end{definition}

Therefore, the TO-ECA is defined as the virtual linear array contained $|\Phi^u|=\mathcal{O}(4N^2)$ sensors.

\subsection{Proposed TO-ECA With Mutual Coupling}
The TO-ECA with mutual coupling can be derived based on the mutual coupling model in (\ref{wang5}) as follows
\begin{equation}
\label{eq2}
\begin{aligned}
\boldsymbol{\tilde{z}}_{\boldsymbol{x}}^{(1)}=\boldsymbol{C}^{(1)}_{vec}\boldsymbol{B}^{(1)}\boldsymbol{p}_{\boldsymbol{s}}^{(1)},\
\boldsymbol{\tilde{z}}_{\boldsymbol{x}}^{(2)}=\boldsymbol{C}^{(2)}_{vec}\boldsymbol{B}^{(2)}\boldsymbol{p}_{\boldsymbol{s}}^{(2)},\\
\boldsymbol{\tilde{z}}_{\boldsymbol{x}}^{(3)}=\boldsymbol{C}^{(3)}_{vec}\boldsymbol{B}^{(3)}\boldsymbol{p}_{\boldsymbol{s}}^{(3)},\
\boldsymbol{\tilde{z}}_{\boldsymbol{x}}^{(4)}=\boldsymbol{C}^{(4)}_{vec}\boldsymbol{B}^{(4)}\boldsymbol{p}_{\boldsymbol{s}}^{(4)},
\end{aligned}
\end{equation}
further, combining the four $\boldsymbol{\tilde{z}}_{\boldsymbol{x}}^{(j)},j\in\{1,2,3,4\}$ to
derive TO-ECA with mutual coupling
\begin{equation}
\begin{aligned}
\label{w20}
\boldsymbol{\tilde{z}}_{\boldsymbol{x}}&=[{\boldsymbol{\tilde{z}}_{\boldsymbol{x}}^{(1)}}^T,{\boldsymbol{\tilde{z}}_{\boldsymbol{x}}^{(2)}}^T,
{\boldsymbol{\tilde{z}}_{\boldsymbol{x}}^{(3)}}^T,{\boldsymbol{\tilde{z}}_{\boldsymbol{x}}^{(4)}}^T]^T
\triangleq\boldsymbol{\tilde{C}}_{vec}\boldsymbol{B}\boldsymbol{p}_s\in\mathbb{C}^{4N^2\times1},
\end{aligned}
\end{equation}
where $\boldsymbol{B}$ and $\boldsymbol{p}_{\boldsymbol{s}}$ are shown in (\ref{w5}) and (\ref{w4}), respectively.
And the virtual mutual coupling matrix $\boldsymbol{\tilde{C}}_{vec}\in\mathbb{C}^{4N^2\times4N^2}$ of TO-ECA is shown as follows

\begin{equation}\nonumber
\begin{cases}
\boldsymbol{C}^{(1)}_{vec}=\boldsymbol{C}\otimes\boldsymbol{C}\otimes\boldsymbol{C},\
\boldsymbol{C}^{(2)}_{vec}=\boldsymbol{C}\otimes\boldsymbol{C}\otimes\boldsymbol{C}^*,\\
\boldsymbol{C}^{(3)}_{vec}=\boldsymbol{C}^*\otimes\boldsymbol{C}^*\otimes\boldsymbol{C},\
\boldsymbol{C}^{(4)}_{vec}=\boldsymbol{C}^*\otimes\boldsymbol{C}^*\otimes\boldsymbol{C}^*,
\end{cases}
\end{equation}

\begin{equation}\nonumber
\boldsymbol{\tilde{C}}_{vec}= \left(
               \begin{matrix}
                 \boldsymbol{C}^{(1)}_{vec} & 0                          & 0                         &0\\
                 0                          & \boldsymbol{C}^{(2)}_{vec} &0                          &0\\
                 0                          & 0                          &\boldsymbol{C}^{(3)}_{vec} &0\\
                 0                          & 0                          &0                          &\boldsymbol{C}^{(4)}_{vec} \\
               \end{matrix}
             \right).
\end{equation}


\section{general design method for third-order array}
In this section, we present a general method to design general third-order array (GTOA) based on TO-ECA by using a generator.
The generator is the basic construction block of the consecutive virtual array generating by GTOA, and it can be any array configuration of choice.

\begin{definition}
(GTOA with generator $\mathcal{G}$): The physical sensor positions of the $\mathcal{G}-\mathcal{H}$ are given by
\begin{equation}
\text{GTOA}\triangleq \mathcal{G}\cup \mathcal{H},
\end{equation}
where $\mathcal{H}=\{ \delta_1+\delta_2\mathcal{F} \}$, $\delta_1,\delta_2\in \mathbb{N}$,$\mathcal{F}=\{0,1,2,...,N_2-1\}$.
And the structure of the GTOA with generator $\mathcal{G}$ is shown as Fig. 1.
\end{definition}

\begin{figure}[h]
 \center{\includegraphics[width=9cm]  {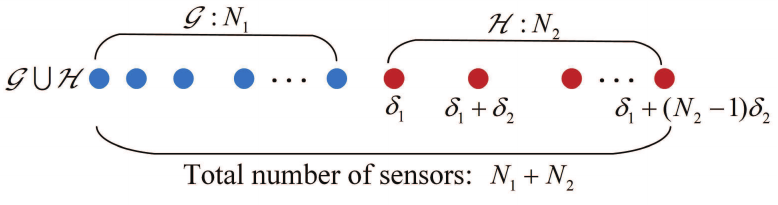}}
 \caption{\label{1} Structure of GTOA}
\end{figure}

\section{Third-order sum-difference co-array designs based on the generator}
When the TO-ECA is hole-free, it can be easily utilized to estimate DOA without any spatial aliasing \cite{Pal2010}, \cite{Pal22011}, \cite{Piya2012}.
Moreover, the number of consecutive lags of DCA mainly
depends on the physical sensor geometry of a linear array \cite{Cohen2020}.
Therefore, the third-order sum and difference array (TO-SDA) with different generators is proposed based on TO-ECA in the part to enhance the DOF and reduce the redundancy.
The TO-SDA with the specific three different generators is designed systematically by appropriately deploying the physical sensor positions of two sub-arrays as shown in Fig. 2,
where $\mathcal{G}$ is the generator, and $\mathcal{H}$ is a ULA with a big inter-spacing between sensors.
\subsection{Generator With CNA}
\begin{definition}
(TO-SDA(CNA)) The TO-SDA(CNA) consists of two sub-arrays with the number of physical sensors $N=N_1+N_2$,
where $N_1$ and $N_2$ represent the number of physical sensors in $\mathcal{G}$ and $\mathcal{H}$.
These sensors in TO-SDA(CNA) are located at positions given by the set $\mathcal{G}$ and $\mathcal{H}$, respectively,
which can be represented as follows

\begin{equation}
\label{wh1}
\begin{aligned}
&\mathbb{S}=\mathcal{G}\cup\mathcal{H},\\
&\mathcal{G}=\{ 0:M_1-1 \}\sx \cdot\sx d\ \cup\\
&\ \ \ \ \{ M_1:M_1+1:M_1+(M_1+1)(M_2-1) \}\sx \cdot \sx d\ \cup\\
&\ \ \ \ \{M_1\sx +\sx (M_1\sx +\sx 1)(M_2\sx -\sx 1)\sx +\sx 1 \sx : \sx 2M_1\sx +\sx (M_1 \sx + \sx 1)(M_2\sx -\sx 1) \}\sx \cdot\sx d,\\
&\mathcal{H}=\{ \delta_1+\delta_2\mathcal{F} \},\ \mathcal{F}=\{0,1,2,...,N_2-1\}\cdot d,\\
&0 \leq \delta_1 \leq \lambda_1+\lambda_2+1,\ 0 \leq \delta_2 \leq 2\lambda_1+1,\\
&\lambda_1\sx =\sx 2(M_1\sx -\sx 1)\sx +\sx 2M_2(M_1\sx +\sx 1), \lambda_2\sx =\sx 3(M_1\sx -\sx 1)\sx +\sx 3M_2(M_1\sx +\sx 1),
\end{aligned}
\end{equation}
where $\lambda_1$ and $\lambda_2$  are the length of longest consecutive segment in second-order SCA and third-order SCA, respectively.
The structure of the TO-SDA(CNA) is shown as Fig. 2 (a).
\end{definition}

\begin{figure*}
\label{example1}
  \centering
  \subfigure[]{
    \label{fig:subfig:onefunction}
    \includegraphics[scale=0.235]{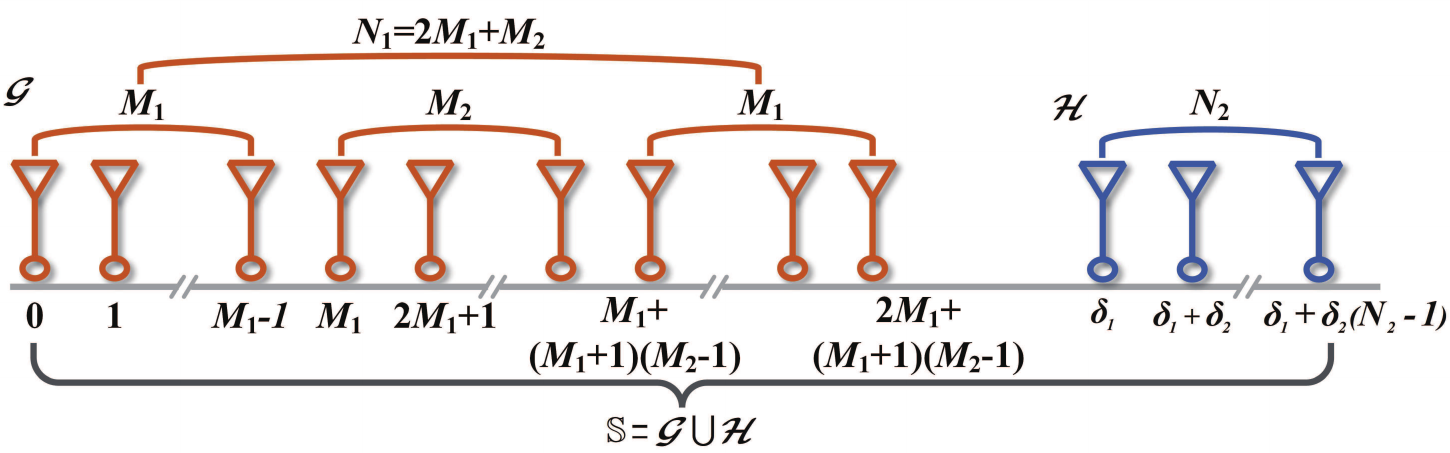}}
  \hspace{0in} 
  \subfigure[]{
    \label{fig:subfig:threefunction}
    \includegraphics[scale=0.235]{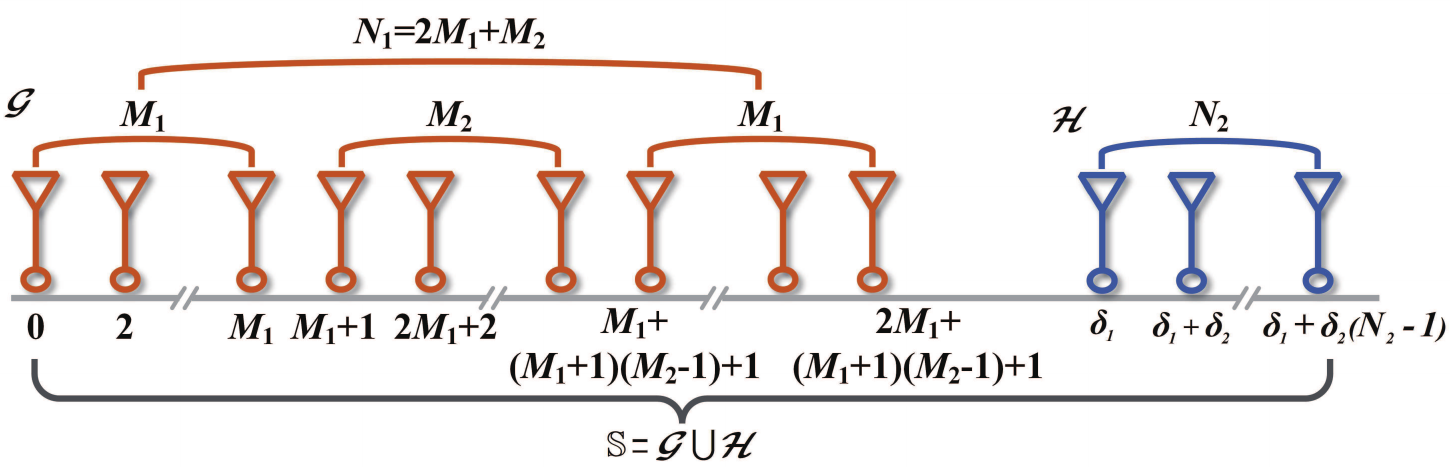}}
    \hspace{0in} 
  \subfigure[]{
    \label{fig:subfig:threefunction}
    \includegraphics[scale=0.235]{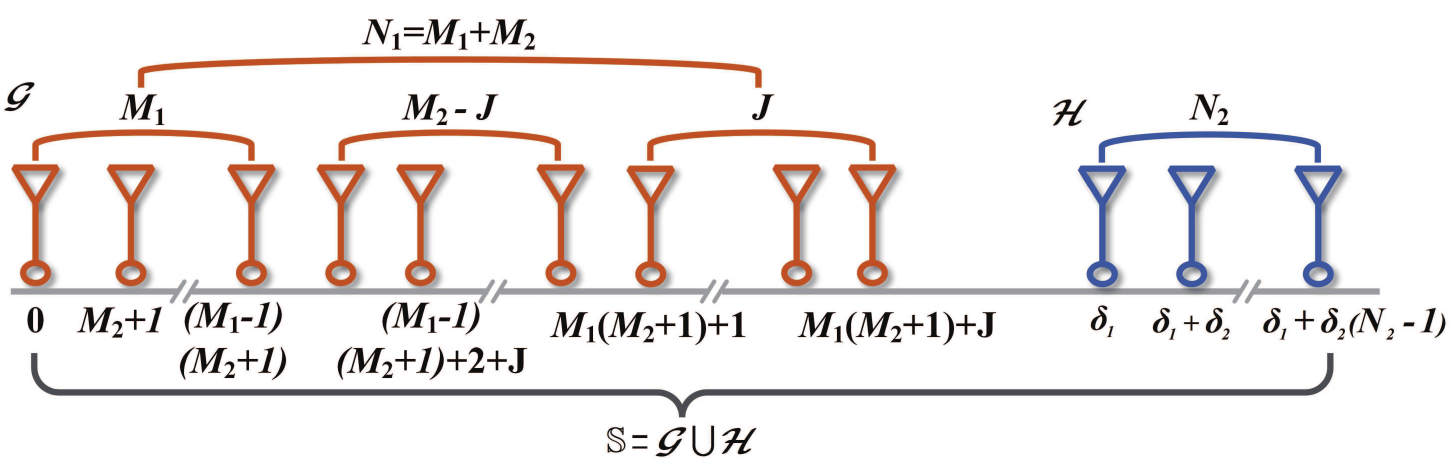}}
  \caption{Structure of TO-SDA with different generators
 (a) TO-SDA(CNA). (b) TO-SDA(SCNA). (c) TO-SDA(TNA-II) }
\end{figure*}

\subsubsection{Consecutive Lags of the TO-SDA(CNA)}

\begin{lemma}
The TO-ECA of TO-SDA(CNA) is hole-free.
\end{lemma}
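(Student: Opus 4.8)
\emph{Proof strategy.} The approach I would take is to show directly that the sensor set $\Phi^u$ of the TO-ECA contains, with no gaps, the full block of consecutive integers $\{-U:U\}\cdot d$ up to the relevant bound $U$; this is the content of ``hole-free'' here. Recalling that the four third-order co-arrays satisfy $\Phi_3=-\Phi_2$ and $\Phi_4=-\Phi_1$, the TO-ECA can be rewritten as
\begin{equation}\nonumber
\Phi^u=\bigl\{\,\epsilon_1 p_{l_1}+\epsilon_2 p_{l_2}+\epsilon_3 p_{l_3}\ :\ l_1,l_2,l_3\in\{1,\dots,N\},\ \epsilon_k\in\{+1,-1\}\,\bigr\}\cdot d ,
\end{equation}
so $\Phi^u=-\Phi^u$ and $0\in\Phi^u$. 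It therefore suffices to prove that the nonnegative part of $\Phi^u$ is an unbroken run $\{0:U\}\cdot d$; by symmetry the claim follows.

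First I would record the ``base layer'' coming from the generator alone. Since $0\in\mathcal G$ (the dense block $\{0:M_1-1\}\cdot d$), padding a triple from $\mathcal G$ with zeros shows that the signed one-, two-, and three-sums drawn entirely from $\mathcal G$ form a nested chain; and a direct computation on the three-block CNA structure --- the two dense length-$M_1$ ULA blocks providing short consecutive runs, the middle sparse block of spacing $M_1+1$ being exactly what meshes the shifted copies of those runs without gaps --- shows that the second- and third-order co-arrays of $\mathcal G$ are themselves hole-free, with longest consecutive segments of half-widths $\lambda_1=2(M_1-1)+2M_2(M_1+1)$ and $\lambda_2=3(M_1-1)+3M_2(M_1+1)$ (equivalently, twice and three times the aperture of $\mathcal G$, as recorded in the definition). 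This is routine bookkeeping once the block decomposition is written out explicitly.

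The heart of the proof is an interval-chaining induction that sorts the signed triples by how many of their three sensors lie in the auxiliary ULA $\mathcal H=\{(\delta_1+\delta_2 k)d:k=0,\dots,N_2-1\}$. Triples with no $\mathcal H$-sensor already cover $\{-\lambda_2:\lambda_2\}\cdot d$. A triple with exactly one $\mathcal H$-sensor contributes $\pm(\delta_1+\delta_2 k)d$ plus a signed two-sum of $\mathcal G$, i.e.\ the interval $\bigl[\pm(\delta_1+\delta_2 k)-\lambda_1,\ \pm(\delta_1+\delta_2 k)+\lambda_1\bigr]\cdot d$; the $k=0$ interval fuses with the base segment iff $\delta_1-\lambda_1\le\lambda_2+1$, and the intervals for consecutive $k$ overlap iff $\delta_2\le2\lambda_1+1$ --- and these are \emph{exactly} the admissibility bounds $0\le\delta_1\le\lambda_1+\lambda_2+1$ and $0\le\delta_2\le2\lambda_1+1$ built into the definition. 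Together they extend the consecutive run out to $\pm\bigl(\delta_1+\delta_2(N_2-1)+\lambda_1\bigr)d$. Triples with two or three $\mathcal H$-sensors are then treated in the same spirit: one $\mathcal H$-index is kept varying to create a ladder of step $\delta_2$, while the remaining sensors --- taken from $\mathcal G$, possibly together with a second fixed $\mathcal H$-sensor --- supply a consecutive ``smear'' that bridges each $\delta_2$-step, so that under the same two inequalities the running consecutive run keeps growing, its right endpoint marching up to the maximum. Assembling the layers then shows the nonnegative part of $\Phi^u$ is unbroken, which is hole-freeness.

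The step I expect to be the main obstacle is the two- and three-$\mathcal H$-sensor layers. When $\delta_2$ is near its upper limit $2\lambda_1+1$, the self-sums of $\mathcal H$ are very sparse, so the chaining there cannot rely on $\mathcal H$ alone: every target value must be realised by combining an $\mathcal H$-contribution with a $\mathcal G$-contribution whose consecutive smear covers the intervening gap, and one has to check that such a representation exists for \emph{every} value in the claimed range --- in particular at the junctions between successive layers and for all admissible $N_2$ --- using the detailed internal structure of $\mathcal G$ (the sizes and locations of its consecutive runs and of its largest gap $M_1+1$), not just the aggregate quantities $\lambda_1$ and $\lambda_2$. Making the two parameter bounds on $\delta_1$ and $\delta_2$ simultaneously do all of this work is the real content of the argument; the rest is bookkeeping.
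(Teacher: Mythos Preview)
Your zero- and one-$\mathcal{H}$-sensor layers are exactly the paper's proof. The paper first records $\mathbb{C}(\mathcal{G},\mathcal{G},\mathcal{G})\cup\mathbb{C}(-\mathcal{G},-\mathcal{G},-\mathcal{G})=\{-\lambda_2:\lambda_2\}$, then (taking $\delta_1,\delta_2$ at their maximal admissible values) chains the blocks $\mathbb{C}(\pm\mathcal{H},\pm\mathbb{C}(\mathcal{G},\mathcal{G}))$ to obtain the consecutive run $\{-(2\lambda_1+\lambda_2+1+(N_2-1)(2\lambda_1+1)):2\lambda_1+\lambda_2+1+(N_2-1)(2\lambda_1+1)\}$, and stops. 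Your interval-chaining argument, with the conditions $\delta_1\le\lambda_1+\lambda_2+1$ and $\delta_2\le2\lambda_1+1$ guaranteeing the fusions, reproduces precisely this and reaches the same endpoint $\delta_1+(N_2-1)\delta_2+\lambda_1$.

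Where you diverge is in continuing to the two- and three-$\mathcal{H}$ layers and calling them ``the main obstacle.'' The paper does not use these layers at all: its claim of ``hole-free,'' together with the stated DOF $4\lambda_1+2\lambda_2+2(N_2-1)(2\lambda_1+1)+3$, refers only to the consecutive segment produced by the one-$\mathcal{H}$ layer, not to $\Phi^u$ being a single unbroken run. Indeed the latter is generally false at the extremal parameters: the pure triple-$\mathcal{H}$ sums reach $3(\delta_1+(N_2-1)\delta_2)$ but are spaced $\delta_2=2\lambda_1+1$ apart, well beyond what any $\mathcal{G}$-smear can bridge near the top. So the step you flag as the hardest is not merely unnecessary --- it would be an attempt to prove something stronger than the lemma asserts, and that stronger statement does not hold. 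Once your one-$\mathcal{H}$ chaining is written out, the proof is finished.
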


\begin{proof}
Firstly, we can get the second-order and third-order cross sum and difference of $\mathcal{G}$ as follows
\begin{equation}
\label{to5}
\begin{aligned}
&\mathbb{C}(\mathcal{G},\mathcal{G})=\{ 0: \lambda_1\},\
\mathbb{C}(-\mathcal{G},-\mathcal{G})=\{ -\lambda_1:0\},\\
&\mathbb{C}(\mathcal{G},\mathcal{G},\mathcal{G})=\{ 0:\lambda_2\},\
\mathbb{C}(-\mathcal{G},-\mathcal{G},-\mathcal{G})=\{ -\lambda_2:0\}.
\end{aligned}
\end{equation}

Thus, the union of above third-order cross sum and difference is
\begin{equation}\nonumber
\begin{aligned}
&\mathbb{C}(\mathcal{G},\mathcal{G},\mathcal{G})\cup
\mathbb{C}(-\mathcal{G},-\mathcal{G},-\mathcal{G})=\{ -\lambda_2:\lambda_2\}.
\end{aligned}
\end{equation}

Secondly, when $\delta_1=\lambda_1+\lambda_2+1$, $\delta_2=2\lambda_1+1$, the $\mathcal{H}$ is
\begin{equation}\nonumber
 \mathcal{H}\supseteq\mathcal{H}^*=\{ \lambda_1+\lambda_2+1 \mkern-5mu : \mkern-5mu 2\lambda_1+1 \mkern-5mu : \mkern-5mu \lambda_1+\lambda_2+1+(N_2-1)(2\lambda_1+1) \} .
\end{equation}

The cross sum and difference of $\mathbb{C}(\mathcal{G},\mathcal{G})$ and $\mathcal{H}^*$ can be obtained as follows
\begin{equation}
\label{to6}
\begin{aligned}
&\mathbb{C}(\mathcal{H}^*,-\mathbb{C}(\mathcal{G},\mathcal{G}))=\{ \lambda_2\sx +\sx 1\sx :\sx \lambda_1\sx +\sx \lambda_2\sx +\sx 1 \sx +\sx (N_2\sx -\sx 1)(2\lambda_1\sx +\sx 1)\},\\
&\mathbb{C}(\mathcal{H}^*,\mathbb{C}(\mathcal{G},\mathcal{G}))=\{ \lambda_1\sx +\sx \lambda_2\sx +\sx 1\sx :\sx 2\lambda_1\sx +\sx \lambda_2\sx +\sx 1
\sx +\sx (N_2\sx -\sx 1)(2\lambda_1\sx +\sx 1))\},\\
&\mathbb{C}(\sx -\sx \mathcal{H}^*\sx ,\sx -\sx \mathbb{C}(\mathcal{G} , \mathcal{G}))\sx =\sx \{ -2\lambda_1\sx -\sx \lambda_2\sx -\sx 1\sx -\sx (N_2\sx -\sx 1)(2\lambda_1\sx +\sx 1)\sx :\sx -\sx \lambda_1\sx -\sx \lambda_2\sx -\sx 1\},\\
&\mathbb{C}(-\mathcal{H}^*,\mathbb{C}(\mathcal{G},\mathcal{G}))=\{ -\lambda_1\sx -\sx \lambda_2\sx -\sx 1\sx -\sx (N_2\sx -\sx 1)(2\lambda_1\sx +\sx 1)\sx :\sx -\sx \lambda_2\sx -1\}.
\end{aligned}
\end{equation}

$\lambda_1+\lambda_2+1+(N_2-1)(2\lambda_1+1)\geq\lambda_1+\lambda_2+1$ when $N_2\geq1$.
Further, the union of above sets in (\ref{to5}) and (\ref{to6}) is
 \begin{equation}\nonumber
\begin{aligned}
&\mathbb{C}(\mathcal{H}^*,-\mathbb{C}(\mathcal{G},\mathcal{G}))\cup
\mathbb{C}(\mathcal{H}^*,\mathbb{C}(\mathcal{G},\mathcal{G}))\cup
\mathbb{C}(-\mathcal{G},-\mathcal{G})\cup\\
&\mathbb{C}(\mathcal{G},\mathcal{G})\cup
\mathbb{C}(-\mathcal{H}^*,-\mathbb{C}(\mathcal{G},\mathcal{G}))\cup
\mathbb{C}(-\mathcal{H}^*,\mathbb{C}(\mathcal{G},\mathcal{G}))\\
&=\{  -2\lambda_1-\lambda_2-1-(N_2-1)(2\lambda_1+1):\\[-2pt]
&\ \ \ \ 2\lambda_1+\lambda_2+1+(N_2-1)(2\lambda_1+1) \}.
\end{aligned}
\end{equation}

To sum up, the TO-ECA of TO-SDA(CNA) is hole-free, and the total consecutive lags of TO-ECA are
$4\lambda_1+2\lambda_2+2(N_2-1)(2\lambda_1+1)+3=(6+8N_2)[(M_1-1)+M_2(M_1+1)]+2N_2+1$.

\end{proof}

\subsubsection{The Maximum DOF of TO-SDA(CNA) With the Given Number of Physical Sensors}
~\par
The DOF of TO-SDA(CNA) designed based on TO-ECA can be further increased by optimizing the distribution of the physical sensors among
$\mathcal{G}$ and $\mathcal{H}$ for a given number of physical sensors.

\begin{lemma}
To obtain the maximum DOF of TO-SDA(CNA) with the given number of sensors, the number of sensors in $\mathcal{G}$ and $\mathcal{H}$ is set to
\begin{equation}\nonumber
\begin{cases}
N_1=\lceil \frac{4N+\sqrt{(4N+15)^2+672}-21}{12} \rfloor,\\
N_2=N-N_1,\ N_1=2M_1+M_2\\[-1pt]
M_1=\lceil\frac{N_1-1}{4}\rfloor, M_2=N_1-2\lceil\frac{N_1-1}{4}\rfloor.
\end{cases}
\end{equation}
\end{lemma}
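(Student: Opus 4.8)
The plan is to treat this as a one-variable discrete optimization problem: express the total number of consecutive lags of the TO-ECA as a function of $N_1$ (with $N_2 = N - N_1$), maximize the continuous relaxation, and then round to the nearest integer, finally unpacking $N_1$ back into $M_1$ and $M_2$. From Lemma~1 we already have the closed-form count of consecutive lags, namely $L = (6+8N_2)\bigl[(M_1-1)+M_2(M_1+1)\bigr] + 2N_2 + 1$. The first step is to rewrite the bracketed quantity purely in terms of $N_1$. Using the intended parametrization $M_1 = \lceil (N_1-1)/4\rceil$ and $M_2 = N_1 - 2M_1$, I would first analyze the \emph{idealized} case where $N_1 - 1$ is divisible by $4$, so $M_1 = (N_1-1)/4$ and $M_2 = (N_1-1)/2$; then $(M_1-1) + M_2(M_1+1)$ becomes a clean quadratic in $N_1$, roughly $\tfrac{1}{8}N_1^2$ plus lower-order terms. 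Substituting $N_2 = N - N_1$ then makes $L$ a cubic-looking expression, but because the dominant cross-terms are $8N_2 \cdot \tfrac18 N_1^2 = N_2 N_1^2$, the leading behavior in the relaxation is $N_1^2(N - N_1)$, whose maximizer is near $N_1 = 2N/3$; the precise correction terms are what produce the $\sqrt{(4N+15)^2 + 672}$ in the claimed formula.

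Concretely, the key steps in order: (i) substitute the divisible-case expressions for $M_1,M_2$ into $L$ to get $L(N_1)$ as an explicit polynomial in $N_1$ and $N$; (ii) differentiate $L(N_1)$ with respect to $N_1$, set the derivative to zero, and solve the resulting quadratic in $N_1$ — the positive root is exactly $\frac{4N + \sqrt{(4N+15)^2 + 672} - 21}{12}$, which is where the claimed rounding target comes from; (iii) verify by the second derivative (or by the shape of the cubic) that this critical point is indeed a maximum on the feasible range $1 \le N_1 \le N$; (iv) argue that because $L$ is, near its peak, a smooth concave function of the real variable $N_1$, the optimal integer $N_1^\star$ is obtained by rounding the continuous optimizer, i.e. by $\lceil \cdot \rfloor$; (v) handle the non-divisible residues of $N_1 \bmod 4$ by checking that the ceiling-based definitions of $M_1$ and $M_2$ still give the largest achievable $L$ among the admissible $(M_1, M_2)$ decompositions of that $N_1$ (this amounts to observing that, for fixed $N_1 = 2M_1 + M_2$, the quantity $(M_1-1)+M_2(M_1+1)$ is maximized by taking $M_1$ as small as the constraints allow, which is what $M_1 = \lceil (N_1-1)/4\rceil$ does after accounting for feasibility of $M_2 \ge 1$).

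The main obstacle I expect is step (v) together with the rounding justification in step (iv): the lag count $L$ is genuinely a function of the \emph{pair} $(M_1,M_2)$, not of $N_1$ alone, so I need a clean monotonicity lemma showing that among all valid $(M_1,M_2)$ with $2M_1+M_2$ fixed the chosen split is optimal, and then that the induced $L(N_1)$ — now a well-defined function of the single integer $N_1$ — is close enough to its concave continuous envelope that naive rounding cannot be beaten. The floor/ceiling interactions mean I should not expect a one-line argument; the honest route is to bound $L$ at the ceiling value against $L$ at the two neighboring integers and at the alternative $(M_1,M_2)$ splits, using that all the relevant differences are $O(N^2)$ and the curvature dominates the discretization error. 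The remaining steps (i)–(iii) are routine algebra: differentiating a cubic and applying the quadratic formula, which I would not grind through here but which mechanically yields the stated $N_1$.
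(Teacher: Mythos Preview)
Your approach is essentially the paper's: reduce to a one-variable problem in $N_1$, relax to the reals, solve the quadratic first-order condition, and round. The paper obtains exactly the cubic $f_1(N_1) = -N_1^3 + (N - \tfrac{21}{4})N_1^2 + (6N + \tfrac{19}{2})N_1 - \tfrac{17}{4} - 5N$ and the same stationary point; it verifies maximality by directly comparing $f_1(N_{11})$ against the endpoint values $f_1(0)$ and $f_1(N)$ (rather than a second-derivative test), and it disposes of your step~(v) by citing a known result for the optimal CNA split rather than arguing it from scratch.

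One genuine error in your step~(v): for fixed $N_1 = 2M_1 + M_2$, substituting $M_2 = N_1 - 2M_1$ gives
\[
(M_1 - 1) + M_2(M_1+1) \;=\; -2M_1^2 + (N_1 - 1)M_1 + (N_1 - 1),
\]
which is a \emph{concave} quadratic in $M_1$ with vertex at $M_1 = (N_1-1)/4$. It is \emph{not} maximized by taking $M_1$ as small as the constraints allow; the optimum is at the interior point, and that is precisely why the round-to-nearest choice $M_1 = \lceil (N_1-1)/4 \rfloor$ is correct. The reason is curvature, not a boundary effect. With this correction your plan is sound and matches the paper's argument.
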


\begin{proof}
See Appendix A.
\end{proof}

\subsection{Generator With Shifted CNA}

\begin{definition}
(TO-SDA(SCNA)) The TO-SDA(SCNA) consists of two sub-arrays with the number of physical sensors $N=N_1+N_2$,
where $N_1$ and $N_2$ represent the number of physical sensors in $\mathcal{G}$ and $\mathcal{H}$.
These sensors in TO-SDA(SCNA) are located at positions given by the set $\mathcal{G}$ and $\mathcal{H}$, respectively,
which can be represented as follows
\begin{equation}
\label{wh4}
\begin{aligned}
&\mathbb{S}=\mathcal{G}\cup\mathcal{H},\\
&\mathcal{G}=\{ 0 \} \cup  \{ 2:M_1 \}\cdot d\ \cup\\
&\ \ \ \ \{ M_1+1:M_1+1: M_2(M_1+1) \}\cdot d\ \cup\\
&\ \ \ \ \{M_2(M_1+1)+1:2M_1+(M_1+1)(M_2-1)+1 \}\cdot d,\\
&\mathcal{H}=\{ \delta_1+\delta_2\mathcal{F} \},\ \mathcal{F}=\{0,1,2,...,N_2-1\}\cdot d,\\
&0 \leq \delta_1 \leq \lambda_1+\lambda_2+1,\ 0 \leq \delta_2 \leq 2\lambda_1+1,\\
&\lambda_1\sx =\sx 2(2M_1\sx +\sx M_2)\sx +\sx 2M_1(M_2\sx -\sx 1), \lambda_2\sx =\sx 3(2M_1\sx +\sx M_2)\sx +\sx 3M_1(M_2\sx -\sx 1),
\end{aligned}
\end{equation}
where $\lambda_1$ and $\lambda_2$  are the length of longest consecutive segment in second-order SCA and third-order SCA, respectively.
The structure of the TO-SDA(SCNA) is shown as Fig. 2 (b).
\end{definition}

\subsubsection{Consecutive Lags of the TO-SDA(SCNA)}

\begin{lemma}
The TO-ECA of TO-SDA(SCNA) is hole-free.
\end{lemma}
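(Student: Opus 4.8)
The plan is to mirror the proof of Lemma 1 (the hole-free property of TO-SDA(CNA)), since TO-SDA(SCNA) has the same two-subarray architecture: a generator $\mathcal{G}$ (here a shifted coprime-type nested array) concatenated with a sparse ULA $\mathcal{H}=\{\delta_1+\delta_2\mathcal{F}\}$ whose inter-element spacing $\delta_2$ and offset $\delta_1$ are chosen relative to $\lambda_1,\lambda_2$, the lengths of the longest consecutive segments of the second- and third-order self-sum/difference co-arrays of $\mathcal{G}$. First I would establish the two identities
\begin{equation}\nonumber
\mathbb{C}(\mathcal{G},\mathcal{G})=\{0:\lambda_1\},\qquad \mathbb{C}(\mathcal{G},\mathcal{G},\mathcal{G})=\{0:\lambda_2\},
\end{equation}
with $\lambda_1=2(2M_1+M_2)+2M_1(M_2-1)$ and $\lambda_2=3(2M_1+M_2)+3M_1(M_2-1)$. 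This is the step I expect to require the most care: unlike the plain CNA, the generator here omits the element at position $1$ (it is $\{0\}\cup\{2:M_1\}\cup\{M_1+1:M_1+1:M_2(M_1+1)\}\cup\{M_2(M_1+1)+1:2M_1+(M_1+1)(M_2-1)+1\}$), so I must check that the missing unit lag is still recovered in the sum co-array and that the "dense inner block / sparse middle block / dense outer block" structure tiles $\{0:\lambda_1\}$ without gaps when summed once, and $\{0:\lambda_2\}$ when summed three times. The right way to organize this is to show the dense ULA pieces of $\mathcal{G}$ already cover a run long enough to bridge the spacing $M_1+1$ of the middle block, exactly as in the CNA argument; the shift is designed so this still holds.

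Next, taking $\mathcal{G}$'s co-array segments as given, I would set $\delta_1=\lambda_1+\lambda_2+1$, $\delta_2=2\lambda_1+1$ (the extreme admissible values), so that $\mathcal{H}$ contains
\begin{equation}\nonumber
\mathcal{H}^*=\{\lambda_1+\lambda_2+1:2\lambda_1+1:\lambda_1+\lambda_2+1+(N_2-1)(2\lambda_1+1)\}.
\end{equation}
Then I would form the four cross-terms $\mathbb{C}(\pm\mathcal{H}^*,\pm\mathbb{C}(\mathcal{G},\mathcal{G}))$, each of which — because $\mathbb{C}(\mathcal{G},\mathcal{G})=\{0:\lambda_1\}$ has length exceeding the gap $2\lambda_1+1$ minus... actually, because consecutive points of $\mathcal{H}^*$ differ by $2\lambda_1+1$ and $\mathbb{C}(\mathcal{G},\mathcal{G})\cup(-\mathbb{C}(\mathcal{G},\mathcal{G}))=\{-\lambda_1:\lambda_1\}$ spans $2\lambda_1+1$ integers — telescopes into a single consecutive block. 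Concatenating these blocks with $\{-\lambda_2:\lambda_2\}$ (the third-order self co-array of $\mathcal{G}$ together with its negation) and checking the endpoints match up, I would conclude the TO-ECA contains
\begin{equation}\nonumber
\{-2\lambda_1-\lambda_2-1-(N_2-1)(2\lambda_1+1):2\lambda_1+\lambda_2+1+(N_2-1)(2\lambda_1+1)\},
\end{equation}
which is hole-free. Since the TO-ECA is the union of TOCA$_1$ through TOCA$_4$, and the above already exhibits a symmetric hole-free interval of the maximal achievable size, no further cases are needed.

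The main obstacle, then, is the first bookkeeping step — verifying the closed-form $\lambda_1,\lambda_2$ for the \emph{shifted} generator and that its self-sum co-arrays are genuinely gap-free despite the deleted sensor at position $1$. I would handle this by splitting $\mathcal{G}$ into its three ULA-like blocks, computing pairwise (and triple-wise) sums block-by-block, and showing the dense blocks are long enough that their sums overlap the images of the sparse middle block — exactly the mechanism that makes coprime/nested generators work — and I would note that the omitted unit element is compensated because $2+(M_1-1)=M_1+1$ and differences such as $2-0$, $3-2$, etc., still generate every small lag needed. Once $\lambda_1,\lambda_2$ are pinned down, the rest is the same telescoping argument as Lemma 1 and goes through verbatim with the substituted values.
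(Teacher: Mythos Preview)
Your proposal is correct and follows essentially the same approach as the paper: the paper's proof of Lemma 3 is simply the one-line remark that ``Based on Lemma 1, it can be similarly proven that Lemma 3 holds true,'' and your plan is precisely to re-run the Lemma 1 telescoping argument with the SCNA generator's $\lambda_1,\lambda_2$ in place of the CNA's. Your write-up is in fact more detailed than the paper's, since you explicitly flag (and sketch how to handle) the one genuinely new verification --- that the shifted generator with the deleted sensor at position $1$ still has gap-free second- and third-order self-sum co-arrays of the stated lengths --- which the paper leaves entirely implicit.
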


\begin{proof}
Based on Lemma 1, it can be similarly proven that Lemma 3 holds true.
\end{proof}

\subsubsection{The Maximum DOF of TO-SDA(SCNA) With the Given Number of Physical Sensors}
~\par
The DOF of TO-SDA(SCNA) designed based on TO-ECA can be further increased by optimizing the distribution of the physical sensors among
$\mathcal{G}$ and $\mathcal{H}$ for a given number of physical sensors.

\begin{lemma}
To obtain the maximum DOF of TO-SDA(SCNA) with the given number of sensors, the number of sensors in $\mathcal{G}$ and $\mathcal{H}$ is set to
\begin{equation}\nonumber
\begin{cases}
N_1=\lceil \frac{4N+\sqrt{(4N+15)^2+288}-21}{12} \rfloor,\\
N_2=N-N_1,\ N_1=2M_1+M_2\\
M_1=\lceil\frac{N_1-1}{4}\rfloor, M_2=N_1-2\lceil\frac{N_1-1}{4}\rfloor.
\end{cases}
\end{equation}
\end{lemma}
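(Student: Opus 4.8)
The plan is to recast the claim as a two-stage integer optimization: for a fixed generator size $N_1 = 2M_1+M_2$, first choose the internal split $(M_1,M_2)$ that maximizes the consecutive-lag count, and then choose the global split $N = N_1 + N_2$ that maximizes it. I would start from Lemma 5: since the TO-ECA of TO-SDA(SCNA) is hole-free, the same bookkeeping that proves Lemma 1 (and yields the lag count in Lemma 2) produces the number of consecutive lags of TO-SDA(SCNA) in closed form; because the definition (\ref{wh4}) forces $\lambda_2 = \tfrac32\lambda_1$, this count collapses to
\begin{equation}\nonumber
\mathrm{DOF} = \lambda_1(4N_2+3) + 2N_2 + 1, \qquad \lambda_1 = 2(2M_1+M_2) + 2M_1(M_2-1),
\end{equation}
a function of $M_1,M_2,N_2\in\mathbb{N}$ subject to $2M_1+M_2+N_2 = N$, which is the quantity we must maximize.

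\textbf{Inner stage.} Fix $N_1 = 2M_1+M_2$ and eliminate $M_2 = N_1-2M_1$; then $\lambda_1 = 2\bigl(N_1 + M_1(N_1-2M_1-1)\bigr)$ is a strictly concave quadratic in $M_1$ with vertex $M_1 = (N_1-1)/4$. Since a concave parabola over $\mathbb{Z}$ attains its maximum at an integer nearest the vertex, the inner optimum is $M_1 = \lceil (N_1-1)/4\rfloor$, $M_2 = N_1 - 2M_1$ — exactly the claimed pair — and substituting the vertex gives the envelope $\lambda_1 = \tfrac14\bigl((N_1+3)^2 - 8\bigr)$.

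\textbf{Outer stage.} Put $v = N_1+3$ and $N_2 = N - N_1$, and insert the envelope $\lambda_1 = \tfrac14(v^2-8)$ into the DOF expression. The result is a cubic in $v$ whose derivative is the quadratic $6v^2 - (4N+15)v - 12 = 0$; taking its unique positive root $v = \tfrac1{12}\bigl((4N+15)+\sqrt{(4N+15)^2+288}\bigr)$ and unwinding gives $N_1 = v-3 = \tfrac1{12}\bigl(4N-21+\sqrt{(4N+15)^2+288}\bigr)$ and $N_2 = N-N_1$, so rounding $N_1$ to the nearest integer yields the stated formula. (The same computation with the CNA envelope $\lambda_1 = \tfrac14((N_1+3)^2-16)$ reproduces Lemma 4, a useful consistency check.)

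\textbf{Main obstacle.} The delicate part is the rounding, together with making the "similar" proof of Lemma 5 explicit enough to pin down the additive constants in the lag count, on which the outer objective — and hence the rounding — depends. The inner rounding is clean by concavity, but the outer objective is only cubic, so I would show that on the admissible range $3 \le N_1 \le N$ the DOF is unimodal in $N_1$ with interior maximum at $v$, so that the integer maximizer is $\lfloor v-3\rfloor$ or $\lceil v-3\rceil$, and then decide between them from the sign of the first difference at $v$. The point most likely to need a short case check is the interaction of the two roundings: the $O(1)$ deviation of the integer $M_1$ from $(N_1-1)/4$ perturbs the outer envelope, and one must confirm this never shifts the discrete argmax off the value in the statement.
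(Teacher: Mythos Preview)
Your approach is essentially the same as the paper's: both set up the two-stage optimization (inner split $(M_1,M_2)$ for fixed $N_1$, then outer split $N_1$ versus $N_2$), relax to $\mathbb{R}$, locate the stationary point of the resulting cubic, and round. Your substitution $v=N_1+3$ is a cosmetic tidying of the same computation the paper carries out directly in $N_1$ (obtaining $f_2(N_1)=-N_1^3+(N-\tfrac{21}{4})N_1^2+(6N+\tfrac{3}{2})N_1+\tfrac{7}{4}+3N$ and its stationary point), and your concavity argument for the inner stage is what the paper outsources to a citation. Two small slips: the hole-free lemma you want is Lemma~3 (SCNA), not Lemma~5, and your ``consistency check'' with the CNA envelope reproduces Lemma~2, not Lemma~4 (Lemma~4 is the statement you are proving). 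Your discussion of the interaction between the two roundings is more scrupulous than the paper, which simply rounds the real optimum without further comment.
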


\begin{proof}
See Appendix B.
\end{proof}

\subsection{Generator With TNA-II}
\begin{definition}
(TO-SDA(TNA-II)) The TO-SDA(TNA-II) consists of two sub-arrays with the number of physical sensors $N=N_1+N_2$,
where $N_1$ and $N_2$ represent the number of physical sensors in $\mathcal{G}$ and $\mathcal{H}$.
These sensors in TO-SDA(TNA-II) are located at positions given by the set $\mathcal{G}$ and $\mathcal{H}$, respectively,
which can be represented as follows
\begin{equation}
\label{to9}
\begin{aligned}
&\mathbb{S}=\mathcal{G}\cup\mathcal{H},\\
&\mathcal{G}=\mathbb{L}_1\cup\mathbb{L}_2\cup\mathbb{L}_3,\\
&\mathbb{L}_1=\{ 0:M_1+1:(M_1-1)(M_2+1) \}\cdot d\,\\
&\mathbb{L}_2=\{ (M_1 \mkern-5mu - \mkern-5mu 1)(M_2\mkern-5mu +\mkern-5mu 1)+ \mkern-5mu J+1 \mkern-5mu :\mkern-5mu (M_1\mkern-5mu -\mkern-5mu 1)(M_2\mkern-5mu +\mkern-5mu 1)\mkern-5mu +\mkern-5mu M_2 \}\cdot d,\\
&\mathbb{L}_3=\{ M_1(M_2+1)+1:M_1(M_2+1)+J\}\cdot d,\\
&\mathcal{H}\sx =\sx \{ \delta_1+\delta_2\mathcal{F} \}, \mathcal{F}\sx =\sx \{0,1,2,...,N_2-1\}\cdot d, J\sx =\sx \lceil N_1/2 \rceil\sx -\sx 1, \\
&0 \leq \delta_1 \leq \lambda_1+\lambda_2+1,\ 0 \leq \delta_2 \leq 2\lambda_1+1,
\end{aligned}
\end{equation}
where $\lambda_1$ and $\lambda_2$  are the length of longest consecutive segment in second-order SCA and third-order SCA, respectively.
The structure of the TO-SDA(TNA-II) is shown as Fig. 2 (c).
\end{definition}

\subsubsection{Consecutive Lags of the TO-SDA(TNA-II)}

\begin{lemma}
The TO-ECA of TO-SDA(TNA-II) is hole-free.
\end{lemma}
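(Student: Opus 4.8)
The plan is to mirror the strategy used in the proof of Lemma 1, namely to exhibit the TO-ECA as a union of consecutive integer segments whose overlaps leave no gaps. First I would compute the second-order self sum-difference set $\mathbb{C}(\mathcal{G},\mathcal{G})$ and the third-order self sum-difference set $\mathbb{C}(\mathcal{G},\mathcal{G},\mathcal{G})$ of the generator $\mathcal{G}=\mathbb{L}_1\cup\mathbb{L}_2\cup\mathbb{L}_3$, showing that they equal $\{0:\lambda_1\}$ and $\{0:\lambda_2\}$ respectively (together with their negatives), for the stated values of $\lambda_1,\lambda_2$. This is the TNA-II analogue of equation (\ref{to5}); the three-block structure of $\mathbb{L}_1,\mathbb{L}_2,\mathbb{L}_3$ with the parameter $J=\lceil N_1/2\rceil-1$ is chosen precisely so that $\mathbb{L}_1$ (a ULA with spacing $M_1+1$) fills the coarse grid while $\mathbb{L}_2$ and $\mathbb{L}_3$ supply the fine offsets $1,\dots,J$ and $J+1,\dots,M_2$ needed to make every residue modulo $M_1+1$ reachable. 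I would verify the claimed values of $\lambda_1$ and $\lambda_2$ by checking that the largest reachable lag is attained and that every smaller nonnegative integer is covered.

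Next, taking $\delta_1=\lambda_1+\lambda_2+1$ and $\delta_2=2\lambda_1+1$ so that $\mathcal{H}\supseteq\mathcal{H}^*=\{\lambda_1+\lambda_2+1:2\lambda_1+1:\lambda_1+\lambda_2+1+(N_2-1)(2\lambda_1+1)\}$, I would compute the four cross sets $\mathbb{C}(\pm\mathcal{H}^*,\pm\mathbb{C}(\mathcal{G},\mathcal{G}))$ exactly as in equation (\ref{to6}). Each such set is an arithmetic-progression-of-blocks, and since $|\mathbb{C}(\mathcal{G},\mathcal{G})|$ spans $\lambda_1+1$ consecutive integers while the step of $\mathcal{H}^*$ is $2\lambda_1+1=2\lambda_1+1$, consecutive translates of the generator block by $\delta_2$ overlap (the key inequality being $\lambda_1+\lambda_2+1\le 2\lambda_1+\lambda_2+1$, i.e.\ $\lambda_1\ge 0$, which guarantees the two sum/difference copies $\mathbb{C}(\mathcal{H}^*,\pm\mathbb{C}(\mathcal{G},\mathcal{G}))$ chain together). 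Taking the union of the six sets in (\ref{to5}) and the analogues of (\ref{to6}), together with the bridging segment $\{-\lambda_2:\lambda_2\}$ coming from the pure third-order self term, yields one contiguous block $\{-2\lambda_1-\lambda_2-1-(N_2-1)(2\lambda_1+1)\ :\ 2\lambda_1+\lambda_2+1+(N_2-1)(2\lambda_1+1)\}$. Since all these lags are realized inside the TO-ECA $\Phi^u$ (they arise from the four cases in (\ref{w28}) restricted to the two sub-arrays), this shows $\Phi^u$ contains a hole-free central segment of that length; a short argument that no lag outside this range is produced then gives that the TO-ECA itself is hole-free.

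The main obstacle I anticipate is Step 1: establishing that $\mathbb{C}(\mathcal{G},\mathcal{G},\mathcal{G})=\{0:\lambda_2\}$ for the TNA-II generator. Unlike the CNA case, where $\mathcal{G}$ is essentially a shifted concatenation of a short ULA and a dilated ULA with a clean modular structure, the TNA-II generator has three interleaved pieces and the floor/ceiling in $J=\lceil N_1/2\rceil-1$ forces a parity case split (even vs.\ odd $N_1$, equivalently even vs.\ odd $M_2$ behaviour). Carefully showing that the three fine-offset ranges $\{1:J\}$ from $\mathbb{L}_3$, $\{J+1:M_2\}$ from $\mathbb{L}_2$, and the coarse ULA $\mathbb{L}_1$ combine—under both second-order and third-order sums with sign choices—to cover every integer up to $\lambda_1$ and $\lambda_2$ without a gap is the delicate combinatorial heart of the proof; once that is in hand, the cross-term bookkeeping in Step 2 and the union in Step 3 proceed exactly as in Lemma 1. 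I would handle the parity split explicitly and, where the bookkeeping is identical to the TO-SDA(CNA) and TO-SDA(SCNA) arguments, invoke those proofs rather than repeating them, as the paper does for Lemma 3.
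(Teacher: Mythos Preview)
Your proposal is correct and follows exactly the approach the paper takes: the paper's proof of this lemma is a one-line appeal to Lemma~1 (``Based on Lemma 1, it also can be similarly proven that Lemma 5 holds true''), and your plan is precisely the unpacking of that appeal, with the generator-specific verification of $\mathbb{C}(\mathcal{G},\mathcal{G})=\{0:\lambda_1\}$ and $\mathbb{C}(\mathcal{G},\mathcal{G},\mathcal{G})=\{0:\lambda_2\}$ for the TNA-II block correctly identified as the only new work. Your remark that the cross-term and union bookkeeping carries over verbatim from the CNA case is exactly what the paper relies on.
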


\begin{proof}
Based on Lemma 1, it also can be similarly proven that Lemma 5 holds true.
\end{proof}

~\par

\subsubsection{The Maximum DOF of TO-SDA(TNA-II) With the Given Number of Physical Sensors}
~\par
The DOF of TO-SDA(TNA-II) designed based on TO-ECA can be further increased by optimizing the distribution of
the physical sensors among $\mathcal{G}$ and $\mathcal{H}$ for a given number of physical sensors.

\begin{lemma}
To obtain the maximum DOF of TO-SDA(TNA-II) with the given number of sensors, the number of sensors in $\mathcal{G}$ and $\mathcal{H}$ is set to
\begin{equation}\nonumber
\begin{cases}
N_1=\frac{3+4N+\sqrt{(4N-9)^2+36}}{12},\\
N_2=N-N_1,\ N_1=M_1+M_2,\ J=\lceil \frac{N_1}{2} \rceil-1,\\
M_1=N_1-\lceil \frac{2N_1-1}{4} \rfloor,\ M_2=\lceil \frac{2N_1-1}{4} \rfloor.
\end{cases}
\end{equation}
\end{lemma}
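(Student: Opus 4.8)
The plan is to follow the same optimization strategy that underlies Lemma~2 and Lemma~4, adapted to the specific closed-form expressions of TO-SDA(TNA-II). First I would invoke Lemma~5, which guarantees that the TO-ECA of TO-SDA(TNA-II) is hole-free, so that the degrees of freedom equal the total number of consecutive lags of the TO-ECA. Reading off the construction in~(\ref{to9}), the generator $\mathcal{G}=\mathbb{L}_1\cup\mathbb{L}_2\cup\mathbb{L}_3$ has $N_1=M_1+M_2$ physical sensors and produces longest consecutive segments of lengths $\lambda_1$ (second order) and $\lambda_2$ (third order) that are explicit polynomials in $M_1$ and $M_2$; the ULA part $\mathcal{H}$ contributes the factor $N_2$. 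Substituting these into the formula for the total consecutive lags of the TO-ECA (the analogue of the count displayed at the end of the proof of Lemma~1) gives a DOF expression of the form $f(M_1,M_2,N_2)$ that is essentially bilinear in $(M_1 M_2\text{-type terms})$ and $N_2$, subject to the constraints $N_1+N_2=N$ and $N_1=M_1+M_2$.

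The next step is the inner optimization over $M_1,M_2$ for fixed $N_1$: since $\lambda_1$ and $\lambda_2$ grow with the product-type term $M_1 M_2$ (plus linear corrections), and $M_1+M_2=N_1$ is fixed, the maximizer is the balanced split, which after accounting for the integer rounding and the asymmetric roles of $\mathbb{L}_1,\mathbb{L}_2,\mathbb{L}_3$ and the parameter $J=\lceil N_1/2\rceil-1$ yields $M_2=\lceil (2N_1-1)/4\rfloor$ and $M_1=N_1-M_2$. I would verify this by writing $M_1 M_2$ as a downward parabola in $M_1$ and checking the claimed rounding against the two parities of $N_1$. Then the outer optimization over $N_1$ (with $N_2=N-N_1$) reduces, after substituting the balanced values, to maximizing a univariate quadratic $g(N_1)$; setting $g'(N_1)=0$ produces the stationary point $N_1=\bigl(3+4N+\sqrt{(4N-9)^2+36}\bigr)/12$, and one checks the leading coefficient is negative so this is a maximum. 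The discriminant $(4N-9)^2+36$ is exactly what emerges from completing the square, analogous to the $(4N+15)^2+672$ and $(4N+15)^2+288$ terms in Lemmas~2 and~4, with the sign changes in the linear term reflecting TNA-II's different offset structure.

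The main obstacle I expect is bookkeeping the integer/ceiling/floor operations consistently: the DOF is a genuinely integer-valued function, so the real-valued stationary point must be compared against its two neighboring integers, and the nested roundings ($M_1,M_2$ depend on $\lceil\cdot\rfloor$ of $N_1$, and $J$ on $\lceil N_1/2\rceil$) can shift the optimal $N_1$ by one in edge cases. I would handle this the same way Appendices~A and~B presumably do — treat $N_1$ as continuous to locate the optimum, then argue that rounding to the stated closed form loses at most a lower-order term and in fact hits the true discrete maximum because the quadratic is flat near its peak. A secondary check is confirming that the chosen $(\delta_1,\delta_2)$ in the admissible ranges $0\le\delta_1\le\lambda_1+\lambda_2+1$, $0\le\delta_2\le 2\lambda_1+1$ is compatible with the hole-free conclusion of Lemma~5 at the optimal parameters, so that the DOF count used in the optimization is actually attained; this is immediate since Lemma~5 is stated for the whole admissible parameter family. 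I would therefore present the proof as: (i) cite Lemma~5 to equate DOF with the consecutive-lag count; (ii) expand that count as $f(M_1,M_2,N_2)$; (iii) optimize over $M_1,M_2$ with $M_1+M_2=N_1$ fixed to get the balanced split; (iv) optimize the resulting $g(N_1)$ over $N_1$ with $N_1+N_2=N$; (v) reconcile the continuous optimum with the integer constraints to obtain the displayed formulas.
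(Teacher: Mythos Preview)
Your plan mirrors the paper's Appendix~C almost exactly: write the DOF count, fix the inner split $(M_1,M_2)$ at the known TNA-II optimum (the paper simply cites \cite{WangY2024} for $M_1=N_1-\lceil(2N_1-1)/4\rfloor$, $M_2=\lceil(2N_1-1)/4\rfloor$ rather than re-deriving the balanced product), then relax $N_1$ to a real variable and optimize. One correction is needed: after substitution the reduced objective is not a quadratic but a \emph{cubic},
\[
f_3(N_1)=-2N_1^3+\bigl(2N+\tfrac{3}{2}\bigr)N_1^2+\bigl(\tfrac{9}{2}-4N\bigr)N_1+4N^2-\tfrac{3}{2}N-\tfrac{47}{8},
\]
so the square root in the claimed $N_1$ arises from solving the quadratic $f_3'(N_1)=0$; you must then discard the smaller root (it is negative for the relevant $N$) and argue via the sign of $f_3'$ that the remaining stationary point is a maximum on $[0,N]$, exactly as the paper does. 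A second point your outline misses is that the paper first performs a case split on $N$ (the range $9\le N\le 14$ versus $N\le 8$ or $N\ge 15$), because the constants in the closed forms for $\lambda_1,\lambda_2$ of TNA-II differ between these regimes; you should record which regime produces the displayed $f_3$ before proceeding to the optimization.
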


\begin{proof}
See Appendix C.
\end{proof}

\subsection{Example of the Proposed Arrays Based on Different Generator}

\begin{figure}
\label{example1}
  \centering
  \subfigure[]{
    \label{fig:subfig:onefunction}
    \includegraphics[scale=0.24]{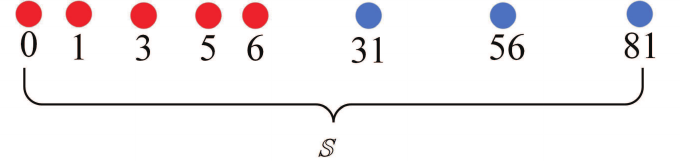}}
  \hspace{0in} 
  \subfigure[]{
    \label{fig:subfig:threefunction}
    \includegraphics[scale=0.21]{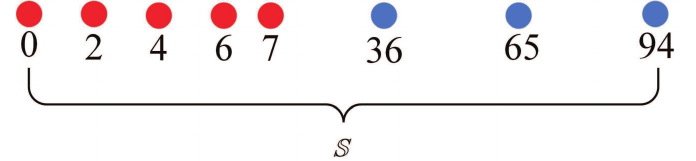}}
    \hspace{0in} 
  \subfigure[]{
    \label{fig:subfig:threefunction}
    \includegraphics[scale=0.191]{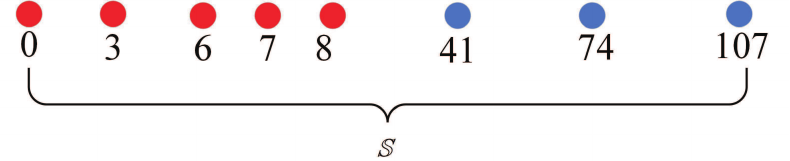}}
  \caption{An example for TO-SDA with different generator while $N=8$
 (a) TO-SDA(CNA). (b) TO-SDA(SCNA). (c) TO-SDA(TNA-II) }
\end{figure}

The array structures of TO-SDA(CNA), TO-SDA(SCNA) and TO-SDA(TNA-II) designed based on TO-ECA with $N=8$
physical sensors are shown in Fig. 3.
The DOF of TO-SDA(CNA), TO-SDA(SCNA) and TO-SDA(TNA-II) are $187$, $217$ and $247$, which are higher than the DOF of $163$ for FL-NA and SE-FL-NA in \cite{Sharma2023}.

\section{Redundancy}
In order to define the redundancy of TO-SDA with different generators, the redundancy of DCA and SCA are introduced as follows
\begin{definition}
(Redundancy\cite{Hoctor1990}). The redundancy of a N-antennas SLA with a consecutive sum co-array is
\begin{equation}\nonumber
R_S=\frac{N(N+1)/2}{2N+1}\geq 1.
\end{equation}
\end{definition}

\begin{definition}
(Redundancy \cite{Moffet1968}). The redundancy of a N-sensors SLA with a consecutive difference co-array is
\begin{equation}\nonumber
R_D=\frac{N(N-1)/2}{E}\geq 1,
\end{equation}
where $E$ is the aperture of the consecutive difference co-array.
\end{definition}
\subsection{The Size of the TO-ECA}
We first study the size of TO-ECA, which not only bounds the size of consecutive segment,
but also in effect characterizes an upper bound of the number of identifiable sources by any DOA method using
third-order cumulant. Before presenting the bounds for the size of $\Phi^u$,
we introduce the definitions of second-order DCA $\mathbb{D}_2$ and SCA $\mathbb{S}_2$ for array with $N$-sensors located at set $\mathbb{S}$,
which is helpful in the following development.
\begin{equation}\nonumber
\begin{aligned}
\mathbb{D}_2=\{l_{n_1}-l_{n_2}|n_1,n_2\in[0,N-1]\},\\
\mathbb{S}_2=\{l_{n_1}+l_{n_2}|n_1,n_2\in[0,N-1]\}.
\end{aligned}
\end{equation}

 It has been shown that
\begin{equation}\nonumber
\begin{aligned}
&2N-1\leq | \mathbb{D}_2 | \leq N^2-N+1,\\
&2N-1\leq | \mathbb{S}_2 | \leq N^2+N+1.
\end{aligned}
\end{equation}\\[-22pt]
\begin{theorem}
The size of $\Phi^u$ has the following bounds
\begin{equation}\nonumber
6N-5 \leq | \Phi^u | \leq k(N),
\end{equation}
where the upper bound $k(N)$ is defined as
\begin{equation}\nonumber
k(N):=\frac{4N^3+3N^2-N+3}{3}.
\end{equation}

\end{theorem}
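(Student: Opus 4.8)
The plan is to establish the two inequalities separately, in each case reducing the count of $|\Phi^u|$ to counts of second-order sum and difference co-arrays. Recall that $\Phi^u = \Phi_1^u \cup \Phi_2^u \cup \Phi_3^u \cup \Phi_4^u$, where $\Phi_1$ collects all triple sums $p_{l_1}+p_{l_2}+p_{l_3}$, $\Phi_4 = -\Phi_1$, $\Phi_2$ collects $p_{l_1}+p_{l_2}-p_{l_3}$, and $\Phi_3 = -\Phi_2$. The key structural observation is that $\Phi_2 = \mathbb{S}_2 + (-\mathbb{S}) = \{s - p \mid s \in \mathbb{S}_2,\ p \in \mathbb{S}\}$ and $\Phi_1 = \mathbb{S}_2 + \mathbb{S}$, so each $\Phi_j$ is a two-fold sumset built on the second-order co-arrays, and the same holds for $\Phi_3,\Phi_4$ with signs reversed. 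I will exploit the normalization $p_1 = 0$ throughout.

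\emph{Lower bound.} First I would show $|\Phi^u| \ge 6N-5$ by exhibiting $6N-5$ distinct virtual sensor positions. Writing the physical aperture as $p_N =: P$, the set $\Phi_1 = \mathbb{C}(\mathbb{S},\mathbb{S},\mathbb{S})$ always contains the $2N-1$ points obtained by taking two sensors at $0$ and letting the third range over $\mathbb{S}$ together with the reflections — more usefully, $\Phi_1 \supseteq \{0, p_2, \ldots, p_N\} \cup \{p_N, p_N + p_2, \ldots, 3p_N\}$, which after accounting for overlaps gives a guaranteed run related to $\mathbb{S}_2$; and $\Phi_1 \subseteq [0, 3P]$, $\Phi_4 \subseteq [-3P,0]$, $\Phi_2,\Phi_3 \subseteq [-P, 2P] \cup [-2P, P]$. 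The cleanest route is: $\Phi_2^u \supseteq \mathbb{D}_2$ (take $l_1 = 1$, so $p_{l_1}=0$, giving all differences $p_{l_2} - p_{l_3}$), hence $|\Phi_2^u| \ge |\mathbb{D}_2| \ge 2N-1$; similarly $\Phi_1^u \supseteq \mathbb{S}_2$ (take $p_{l_3}=0$) so $|\Phi_1^u| \ge 2N-1$, and likewise $|\Phi_4^u| \ge 2N-1$. Since $\Phi_1^u \subseteq [0,3P]$ contributes its $N-1$ strictly positive "far" points $\{P + p_j : j \ge 2\}\setminus[0,P]$ beyond where $\Phi_2^u$ lives near the origin, and $\Phi_4^u$ its $N-1$ strictly negative analogues, a careful disjointness bookkeeping of the near-origin block (shared, size $\ge 2N-1$) with the upper tail from $\Phi_1$ (size $\ge N-1$ new) and the lower tail from $\Phi_4$ (size $\ge N-1$ new) yields $|\Phi^u| \ge (2N-1) + 2(N-1) + 2(N-1) = 6N-5$; I would double-check the tail counts against the ULA case, where they are tight.

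\emph{Upper bound.} For $k(N) = \frac{4N^3+3N^2-N+3}{3}$, I would bound each $|\Phi_j^u|$ and correct for overlaps. The crude bound $|\Phi_1^u| \le |\mathbb{S}_2|\cdot|\mathbb{S}| \le (N^2+N+1)N$ is too weak; instead, note $\Phi_1$ consists of sums of three elements from an $N$-set, so $|\Phi_1^u| \le \binom{N+2}{3}\cdot(\text{something})$ — more precisely, the number of multisets of size $3$ from $N$ values is $\binom{N+2}{3} = \frac{N(N+1)(N+2)}{6}$, and this is exactly the sharp combinatorial cap on $|\Phi_1^u|$ (attained when all triple-sums are distinct). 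Similarly $|\Phi_2^u| \le N \cdot \binom{N+1}{2} = \frac{N^2(N+1)}{2}$ (choose an unordered pair for the "$+$" slots, an element for the "$-$" slot). Then $|\Phi^u| \le |\Phi_1^u| + |\Phi_2^u| + |\Phi_3^u| + |\Phi_4^u| - (\text{overlaps at }0,\text{ etc.})$; using $|\Phi_1^u|=|\Phi_4^u|$, $|\Phi_2^u|=|\Phi_3^u|$, and that $\Phi_1^u \cap \Phi_4^u \ni 0$ while $\Phi_2^u = -\Phi_3^u$ forces $|\Phi_2^u \cup \Phi_3^u| \le 2|\Phi_2^u| - 1$, one gets $|\Phi^u| \le 2\cdot\frac{N(N+1)(N+2)}{6} + 2\cdot\frac{N^2(N+1)}{2} - c$ for a small correction $c$, and I would verify this simplifies to exactly $k(N)$ after collecting terms (the cubic coefficient $\frac{2}{6}+? $ must land on $\frac{4}{3}$, which suggests the $\Phi_2$-type sets actually also carry a cubic term and the right decomposition weights them appropriately — this is the place to be careful).

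\emph{Main obstacle.} The hard part will be the upper bound: getting the constant and lower-order coefficients to match $k(N)$ exactly requires the \emph{sharp} combinatorial maximum for $|\Phi_1^u|$ and $|\Phi_2^u|$ (not just a product bound) and an exact, not merely asymptotic, account of the forced overlaps among $\Phi_1^u,\Phi_2^u,\Phi_3^u,\Phi_4^u$ near the origin and at the aperture extremes. I expect the argument hinges on showing that these four caps cannot all be simultaneously attained with the overlaps being small — equivalently, identifying which array (presumably a suitable non-redundant or "maximally spread" configuration) simultaneously maximizes the union — and then a direct polynomial identity check closes it. The lower bound is routine by comparison with the ULA; the upper bound is where the stated cubic $\frac{4N^3}{3}$ must be pinned down precisely.
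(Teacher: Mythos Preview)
Your overall strategy---combinatorial enumeration for the upper bound, explicit construction for the lower---matches the paper's. The paper's one simplification you are missing is to exploit the global symmetry $\Phi^u=-\Phi^u$ up front: instead of bounding all four $|\Phi_j^u|$ and then chasing pairwise overlaps, they count only the strictly positive part of $\Phi^u$, double it, and add~$1$ for the origin. This collapses your $\Phi_1$/$\Phi_4$ and $\Phi_2$/$\Phi_3$ overlap bookkeeping into a single halving, and on the positive side they bound the triple-sum contribution by the number of $3$-multisets (giving the $\tfrac{(N-1)N(N+1)}{6}$ term) and the sum--difference contribution separately. Your anticipated ``small correction $c$'' is in fact $N^2+N-1$, which is exactly why the direct inclusion--exclusion route is painful; the symmetry trick sidesteps it.

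Two local issues in your sketch. First, in the lower bound your verbal description (``upper tail from $\Phi_1$, size $\ge N-1$ new'') does not match your formula $(2N-1)+2(N-1)+2(N-1)=6N-5$; the formula is the correct one, realized by taking the $2(N-1)$ points $\{P+p_j\}_{j\ge 2}\cup\{2P+p_j\}_{j\ge 1}$ above $P$ (and symmetrically below $-P$), all disjoint from $\mathbb{D}_2\subseteq[-P,P]$. Second, your bound $|\Phi_2^u|\le N\binom{N+1}{2}$ already gives the right cubic coefficient---$2\cdot\tfrac{1}{6}+2\cdot\tfrac{1}{2}=\tfrac{4}{3}$---so your parenthetical worry that ``the cubic coefficient must land on $\tfrac{4}{3}$'' is unfounded; the difficulty is entirely in the lower-order terms, and that is precisely what the symmetry reduction cleans up.
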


\begin{proof}
For the lower bound, it is known \cite{Pal2010} that a linear array with $N$-sensors has at least $2N-1$ distinct sensors in its DCA.
Similarly, it can be inferred that a linear array with $N$-sensors has at least $6N-5$ distinct sensors in its TO-ECA \cite{Sharma2023}.

As for the upper bound, since TO-ECA is symmetric corresponding to zero, the positive segment of TO-ECA is considered as follows.
For the case $\{l_{n_1}+l_{n_2}+l_{n_3} | (l_{n_1},l_{n_2},l_{n_3}\in[0,N-1])\}$ of TO-ECA,
there are total number of $1+(1+2)+(1+2+3)+\cdots+(1+2+3+\cdots+(N-1))=\frac{(N-1)N(N+1)}{6}$.
For the case $\{l_{n_1}-l_{n_2}-l_{n_3} | (l_{n_1},l_{n_2},l_{n_3}\in[0,N-1])\}$ of TO-ECA,
there are total number of $4\frac{N(N+1)}{2}$.
Therefore, the positive segment of TO-ECA is $\frac{(N-1)N(N+1)}{6}+4\sx \cdot\sx \frac{N(N+1)}{2}=\frac{4N^3+3N^2-N}{6}$,
and the upper bound of TO-ECA is $k(N)=\frac{4N^3+3N^2-N+3}{3}$.

\end{proof}

\subsection{Redundancy of TO-ECA}
After obtaining the size of consecutive segment for TO-ECA $\mathbb{Z}$,
and according to the redundancy definitions of DCA and SCA, it can be seen that $2N+1$ and $2L+1$ are the consecutive segment in SCA and DCA.
Further, it is can be known that the bounds of $R_S$ and $R_D$ are characterized the size of $N$ and $L$.
Therefore, the TO-ECA redundancy is defined as follows by generalizing the redundancy definition idea of SCA and DCA.\\[-15pt]

\begin{definition}
The redundancy of TO-ECA is defined as\\[-5pt]
\begin{equation}\nonumber
R_{T}=\frac{\tilde{k}(N)}{Z},
\end{equation}\\[-5pt]
where $\tilde{k}(N)=(k(N)-1)/2=\frac{4N^3+3N^2-N}{6}$ denotes the one-side maximal size of TO-ECA with N-sensors and $|\mathbb{Z}|$ is the consecutive segment of $\Phi^u$,
which means $\mathbb{Z}(\mathbb{S},\mathbb{S}',\mathbb{S}'')=[-Z:Z]\subseteq\Phi^u$.
\end{definition}

The redundancy of TO-ECA quantifies the discrepancies between the consecutive segment $\mathbb{Z}$ and the maximal TO-ECA.
If $R_{T} = 1$, all the elements of TO-ECA are distinct with a ULA.
If $R_{T} >1$, either the normalized size $\Phi^u$ is smaller than 1, or there are holes within its $\Phi^u$.

The definition of $R_{T}$ uses two one-sided quantities $(\tilde{k}(N),Z)$ instead of two-sided quantities $(k(N),\mathbb{Z})$.
This definition follows the convention of the (second-order) redundancy in \cite{Moffet1968},
which can be further traced back to the topic of difference basis in number theory \cite{Redei1948,Erdos1948,Leech1956}.
Now discuss the bounds of $R_{T}$. Since $U$ may be $\{0\}$, in which case $R_{T} = \infty$,
we skip the discussion on the lower bound of $R_{T}$.
\begin{theorem}
The redundancy $R_{T}$ of TO-ECA satisfies\\[-5pt]
\begin{equation}
\label{wto12}
\begin{aligned}
R_{T}>L_3(N):&=(1+\frac{2}{3\pi})\frac{\tilde{k}(N)}{\left( \left(\begin{matrix} \small{N} \\ \small{3} \\ \end{matrix} \right) \right)}\\
&=(1+\frac{2}{3\pi})\frac{4N^3+3N^2-N}{N^3+3N^2+2N}.
\end{aligned}
\end{equation}
\end{theorem}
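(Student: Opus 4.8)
The claimed inequality is an upper bound on the central hole‑free segment in disguise. By the definition of the TO‑ECA redundancy we have $R_{T}=\tilde{k}(N)/Z$, where $[-Z:Z]\subseteq\Phi^{u}$ is the largest contiguous segment of the TO‑ECA; hence $R_{T}>L_{3}(N)$ is equivalent to
\[
Z<\frac{\binom{N+2}{3}}{\,1+\tfrac{2}{3\pi}\,},
\]
with $\binom{N+2}{3}=\tfrac{N(N+1)(N+2)}{6}$ the quantity appearing in the denominator of $L_{3}(N)$, which is also the largest possible size of the third‑order sum co‑array $A+A+A$ of an $N$‑element integer set $A$. The crude estimate $2Z+1\le|\Phi^{u}|\le k(N)$ from the preceding theorem only gives $Z\le\tilde{k}(N)$, i.e.\ $R_{T}\ge 1$, so a genuinely sharper argument is needed.

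The plan is to pass to the co‑array weight function and run a positivity (Fej\'er‑kernel) argument in the spirit of the classical lower bounds for difference bases of R\'edei, Erd\H{o}s and Leech cited above. Writing $A$ for the set of integer sensor positions (so $\mathbb{S}=A\cdot d$) and $f(\omega)=\sum_{a\in A}e^{j a\omega}$, the four cumulant cases have generating functions $f^{3}$, $f^{2}\bar f$, $\bar f^{2}f$ and $\bar f^{3}$, so the total realisation weight $W(k)$ of a lag $k$ (the number of triples producing $k$, summed over the four cases) satisfies
\[
F(\omega):=\sum_{k}W(k)e^{jk\omega}=g(\omega)^{3}-2\,g(\omega)h(\omega),
\]
where $g:=f+\bar f$ is real‑valued and $h:=|f|^{2}\ge 0$. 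One has $k\in\Phi^{u}\iff W(k)\ge 1$, $W\ge 0$ throughout, $F(0)=\sum_{k}W(k)=4N^{3}$, and $W(k)\ge 1$ for all $|k|\le Z$. Integrating $F$ against a suitably chosen non‑negative trigonometric kernel that is essentially flat over the occupied frequency band, the constraint $W(k)\ge1$ on $[-Z:Z]$ forces the left‑hand side to exceed a fixed multiple of $Z$, while the right‑hand side is controlled by $\|g\|_{\infty}\le 2N$, $0\le h\le N^{2}$, $\tfrac1{2\pi}\int_{-\pi}^{\pi}h\,d\omega=N$ and $F(0)=4N^{3}$. Since $F$ is cubic in $f$, the kernel enters to the third power, and the decisive constant $1+\tfrac{2}{3\pi}$ emerges from a cubic trigonometric moment (an integral of the form $\tfrac1{2\pi}\int_{0}^{\pi}\sin^{3}\theta\,d\theta=\tfrac{2}{3\pi}$), exactly as the analogous constants arise in the second‑order difference‑basis bounds.

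The main obstacle is that, unlike the second‑order situation where the relevant generating function $|f|^{2}$ is an honest square and therefore non‑negative, here $F=g^{3}-2gh$ is a difference of products whose sum‑type part $g^{3}$ — coming from $\Phi_{1}\cup\Phi_{4}=\pm(A+A+A)$ — is not sign‑definite; one must therefore separate the sum‑type and difference‑type contributions and handle their frequency supports individually, since $\pm(A+A-A)$ is confined to $[-2L:2L]$ (with $L$ the array aperture) whereas $\pm(A+A+A)$ reaches $\pm3L$. This entails a case distinction according to whether $Z$ exceeds $2L$, and in the large‑aperture regime one has to argue that a long central hole‑free segment nevertheless compels the third‑order sum co‑array to carry out most of the covering — which is precisely why $Z$ is governed by $\binom{N+2}{3}$, degraded only by the efficiency factor $(1+\tfrac{2}{3\pi})^{-1}$ that reflects the unavoidable concentration of the weight $W$ near the origin.
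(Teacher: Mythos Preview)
Your generating‐function formulation and the paper's argument are in fact the \emph{same} object in different notation: with $c=\sum_n\cos(l_ny)$ and $s=\sum_n\sin(l_ny)$ one has $g=2c$, $h=c^2+s^2$, and your $F=g^3-2gh=4c(c^2-s^2)$, which is exactly twice the paper's test function
\[
f(y)=\sum_{n_1,n_2,n_3}\bigl[\cos((l_{n_1}{+}l_{n_2}{+}l_{n_3})y)+\cos((l_{n_1}{-}l_{n_2}{-}l_{n_3})y)\bigr]=2c(c^2-s^2).
\]
The paper then runs the classical R\'edei--R\'enyi scheme directly: for each $z\in[-Z,Z]$ pick one representing triple, bound the remaining terms by $1$ using the multiset count $\left(\!\binom{N}{3}\!\right)$, reduce to a Dirichlet kernel, and evaluate at the single point $y=3\pi/(2Z+1)$ to extract the constant $1+\tfrac{2}{3\pi}$. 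Your Fej\'er‐kernel averaging is a smoothed version of this same step; it does not introduce a genuinely different mechanism.

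The gap you identify is real, and it is the \emph{only} substantive step: the argument stands or falls on $f(y)\ge 0$ (equivalently $F\ge 0$). The paper simply asserts ``due to the non negativity of $f(y)$ for all $y$'' and moves on. As you note, $F=4c(c^2-s^2)$ is not sign‐definite in general (already for $N=2$, positions $\{0,1\}$, one has $f(2\pi/3)<0$), so this is precisely the place where extra work is required. Your proposal acknowledges this but does not close it: the suggested case split on $Z$ versus $2L$, the ``separate handling'' of $\pm(A{+}A{+}A)$ and $\pm(A{+}A{-}A)$, and the appearance of $\tfrac{1}{2\pi}\int_0^\pi\sin^3\theta\,d\theta=\tfrac{2}{3\pi}$ are all stated as intentions rather than carried out, and nothing in your outline shows that the constant $1+\tfrac{2}{3\pi}$ survives once pointwise positivity is abandoned. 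To turn this into a proof you must either exhibit a non‐negative trigonometric combination of the four cumulant kernels whose zeroth Fourier coefficient is still controlled by $\left(\!\binom{N}{3}\!\right)$, or replace the positivity step by a counting argument that does not need $F\ge 0$ at all.
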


\begin{proof}
Inspired by the proof method in \cite{Redei1948} and \cite{Linel1993},
the lower bound of $R_{T}$ is given by a function of $y$ as follows
\begin{equation}\nonumber
\begin{aligned}
f(y)=\sum_{\substack{0\leq n_i \leq N-1 \\ (i=1,2,3)}}[\cos((l_{n_1}\sx +\sx l_{n_2}\sx +\sx l_{n_3})y) +\cos((l_{n_1}\sx -\sx l_{n_2}\sx -\sx l_{n_3})y)]\\
\end{aligned}
\end{equation}

By the definition of $\mathbb{Z}$, for each integer $z \in [-Z, Z]$, we select one 2-tuple
$T_z= (n_1(z), n_2(z),n_3(z))$ such that $n_1(z) + n_2(z)+n_3(z)= z_1$, $n_1(z) - n_2(z)-n_3(z)= z_2$, $0 \leq n_i(z) \leq N-1 (i=1,2,3)$.
Denoting $\mathcal{T} =\{T_z | z \in [-Z, Z]\}$, we can decompose $f(y)$ into\\[-4pt]
\begin{equation}\nonumber
\begin{aligned}
f(y)&\sx =\sx \sum_{(\sx n_1 , n_2 , n_3\sx ) \in \mathcal{T}}[\cos((l_{n_1}\sx +\sx l_{n_2}\sx +\sx l_{n_3})y)\sx +\sx \cos((l_{n_1}\sx -\sx l_{n_2}\sx -\sx l_{n_3})y)]\\
&\sx +\sx \sum_{(\sx n_1, n_2,n_3\sx ) \notin \mathcal{T}}[\cos((l_{n_1}\sx +\sx l_{n_2}\sx +\sx l_{n_3})y)\sx +\sx \cos((l_{n_1}\sx -\sx l_{n_2}\sx -\sx l_{n_3})y)].
\end{aligned}
\end{equation}

Since $\cos y\leq1$ for all $y\in \mathbb{R}$ and $|\mathcal{T}|=2Z+1$, we have
\begin{equation}\nonumber
\begin{aligned}
f(y)&\leq\sum_{z_1=-Z}^{Z}\cos(z_1y)\sx +\sum_{z_2=-Z}^{Z}\cos(z_2y)
\sx +2\left( \left(
               \begin{matrix}
                 N \\
                 3                    \\
               \end{matrix}
         \right) \right)
             \sx -|\mathcal{T}|.
\end{aligned}
\end{equation}

To further discuss the value of $f(y)$, $\sum\limits_{u=-Z}^{Z}\cos(zy)$ is considered.
Since $\cos(zy)=\frac{e^{izy}+e^{-izy}}{2}$, we can get
\begin{equation}
\label{wto6}
\begin{aligned}
\sum_{z=-Z}^{Z}\cos(zy)=\frac{1}{2}(\sum_{z=-Z}^{Z}e^{izy}+\sum_{z=-Z}^{Z}e^{-izy}),
\end{aligned}
\end{equation}
for the positive of exponent in (\ref{wto6}), it can be rewritten as follows\\[-13pt]
\begin{equation}
\label{wto7}
\begin{aligned}
\Upsilon&\triangleq\sum_{z=-Z}^{Z}e^{izy}=\sum_{z=0}^{Z}e^{izy}+\sum_{z=-Z}^{-1}e^{izy}.
\end{aligned}
\end{equation}

According to the symmetry of $\Upsilon$, we can get\\[-5pt]
\begin{equation}\nonumber
\begin{aligned}
\sum_{z=1}^{Z}e^{-izy}=\sum_{z=-Z}^{-1}e^{izy}.
\end{aligned}
\end{equation}

Therefore, $\Upsilon$ in (\ref{wto7}) can be transformed as\\[-5pt]
\begin{equation}\nonumber
\begin{aligned}
\Upsilon&=1+\sum_{z=1}^{Z}(e^{izy}+e^{-izy})
=1+\sum_{z=1}^{Z}2\cos(zy).
\end{aligned}
\end{equation}

In addition, for $\sum_{z=-Z}^{Z}e^{izy}=e^{-iZy}\sum_{k=0}^{2Z}e^{iky}$,  we can get the following equation from the principle of summing geometric series\\[-8pt]
\begin{equation}
\label{wto8}
\begin{aligned}
\sum_{k=0}^{K}q^k=\frac{1-q^{N+1}}{1-q},\ \ (q=e^{iy}),
\end{aligned}
\end{equation}\\[-6pt]
substituting (\ref{wto8}) into (\ref{wto7}), we can obtain\\[-8pt]
\begin{equation}\nonumber
\begin{aligned}
\sum_{u=-Z}^{Z}e^{izy}=e^{-iZy}\frac{1-e^{i(2Z+1)y}}{1-e^{iy}}.
\end{aligned}
\end{equation}

Since $1-e^{iy}=1-\cos y-i\sin y$, the modulo of it is as follows
\begin{equation}
\label{wto9}
\begin{aligned}
|1-e^{iy}|=2\sin\frac{y}{2}.
\end{aligned}
\end{equation}

Similarly, for $1-e^{i(2Z+1)y}=1-\cos((2Z+1)y)-i\sin((2Z+1)y)$, the modulo of it is as follows
\begin{equation}
\label{wto10}
\begin{aligned}
|1-e^{i(2Z+1)y}|=2\sin[(Z+\frac{1}{2})y].
\end{aligned}
\end{equation}

Therefore, substituting (\ref{wto9}) and (\ref{wto10}) into (\ref{wto6})  we can get
\begin{equation}
\begin{aligned}
\sum_{z=-Z}^{Z}\cos(zy)=\frac{\sin[(Z+\frac{1}{2})y]}{\sin\frac{y}{2}}.
\end{aligned}
\end{equation}

Due to the non negativity of $f(y)$ for all $y$, therefore we can get
\begin{equation}\nonumber
\begin{aligned}
\frac{2\sin[(Z+\frac{1}{2})y]}{\sin\frac{y}{2}}+
2\left( \left(
               \begin{matrix}
                 N \\
                 3                    \\
               \end{matrix}
         \right) \right)
-2Z-1 \geq f(y)\geq0.
\end{aligned}
\end{equation}

By rearranging this inequality and selecting $y=\frac{3\pi}{2Z+1}$ for $Z \geq 1$, we can obtain
\begin{equation}
\label{wto11}
\begin{aligned}
\frac{\left( \left(
               \begin{matrix}
                 N \\
                 3                    \\
               \end{matrix}
         \right) \right)}{Z} \sx \geq\sx (1-\frac{\sin[(Z_2+\frac{1}{2})y]}{Z\sin\frac{y}{2}})+\frac{1}{2Z} \sx >\sx (1+\frac{2}{3\pi}).
\end{aligned}
\end{equation}

The desired inequality can be obtained by multiplying (\ref{wto11}) with $\tilde{k}(N)/ \left( \left(\begin{matrix} N \\ 3 \\ \end{matrix} \right) \right)$ on both sides.

\end{proof}

The lower bound $L_3(N)$ leads to insights into the TO-ECA. Among all sparse arrays, the size of $\mathbb{Z}$ is strictly smaller
than $k(N)$ for $N \geq 2$. The reason is as follows.
Since it can be shown that $L_3(N)$ is an increasing function of $N$,
the redundancy of TO-ECA satisfies $R_{T} > L_3(N) \geq L_3(2) \approx 2.1214$ for $N \geq 2$.
This relation indicates that for $N \geq 2$, $\tilde{k}(N)>Z$, or equivalently $|\mathbb{Z}| < k(N)$.

For sufficiently large $N$, we can derive a upper bound for $|\mathbb{Z}|$,
which is stronger than $k(N)$. Based on (\ref{wto12}), $\tilde{k}(N)/L_3(N)$ is
approximately $0.8488N^3$ in this region. As a result, we have the following asymptotic relation for $|\mathbb{Z}|$

\begin{equation}\nonumber
|\mathbb{Z}|=2Z+1\leq 1+2\frac{\tilde{k}(N)}{L_3(N)}\approx 1.6977 N^3.
\end{equation}\\[-25pt]
\subsection{Redundancy of TO-SDA with three different generators}

\textbf{Case 1:}
For the special case of TO-ECA, the $R_{T}$ of TO-SDA(CNA) is considered.
We can get the $Z_1=(-(N_1^*)^3+(N-\frac{21}{4})(N_1^*)^2+(6N+\frac{19}{2})N_1^*-\frac{17}{4}-5N-1)/2\ (N_1^*=\lfloor\frac{4N+\sqrt{(4N+15)^2+672}-21}{12}\rfloor)$
according to Lemma2.\par
\textbf{Case 2:}
For the special case of TO-ECA, the $R_{T}$ of TO-SDA(SCNA) is considered.
We can get the $Z_2=(-(N_1^*)^3+(N-\frac{21}{4})(N_1^*)^2+(6N+\frac{3}{2})N_1^*+\frac{7}{4}+3N-1) /2\ (N_1^*=\lfloor\frac{4N+\sqrt{(4N+15)^2+288}-21}{12}\rfloor)$
according to Lemma4.\par
\textbf{Case 3:}
For the special case of TO-ECA, the $R_{T}$ of TO-SDA(TNA-II) is considered.
We can get the $Z_3=(-2(N_1^*)^3+(2N+\frac{3}{2})(N_1^*)^2+(\frac{9}{2}-4N)N_1^*+4N^2-\frac{3}{2}N-\frac{47}{8}-1)/2\ (N_1^*=\lceil\frac{3+4N+\sqrt{(4N-9)^2+36}}{12}\rfloor)$
according to Lemma6.

Therefore, the $R_{T}^i(i=1,2,3)$ of the TO-SDA(CNA), TO-SDA(SCNA) and TO-SDA(TNA-II) can be obtained based on the Definition 10 as follows
\begin{equation}\nonumber
\begin{aligned}
&R_{T}^i=\frac{4N^3+3N^2-N}{6Z_i}.
\end{aligned}
\end{equation}

\begin{corollary}
The upper and lower bounds of the redundancy $R_T^i (i=1,2,3)$ with the number of physical sensors varying from 2 to infinity are
\begin{equation}\nonumber
\begin{aligned}
&2.4789\leq R_T^1\leq 9,\ 2.200\leq R_T^2\leq 9,\ 2.1477\leq R_T^3\leq 4.5.
\end{aligned}
\end{equation}
\end{corollary}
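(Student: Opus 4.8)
The plan is to establish \textbf{Corollary 1} by bounding each $R_T^i = \frac{4N^3+3N^2-N}{6Z_i}$ over $N \in \{2, 3, \dots\} \cup \{\infty\}$, treating the three cases separately but in parallel. For each $i$, I would first substitute the closed-form expression for $Z_i$ (stated just above in Cases 1--3) together with the closed-form expression for $N_1^*$ (from Lemmas 2, 4, 6 respectively) into the ratio, obtaining $R_T^i$ as an explicit function of $N$ alone. The square-root term inside $N_1^*$ is the nuisance, so I would write $N_1^* = \alpha N + O(1)$ where, reading off the formulas, $\alpha = \tfrac{4 + 4}{12} = \tfrac{2}{3}$ for Cases 1 and 2 (since $\sqrt{(4N+15)^2 + c} = 4N + O(1)$), and $\alpha = \tfrac{4 + 4}{12} = \tfrac{2}{3}$ again for Case 3 by the same reasoning on $\sqrt{(4N-9)^2+36}$.

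Next I would compute the asymptotic value $\lim_{N\to\infty} R_T^i$, which governs one endpoint of each interval. Plugging $N_1^* \sim \tfrac{2}{3}N$ into the leading cubic term of $Z_i$: for $Z_1$ the leading term is $\tfrac{1}{2}(-(N_1^*)^3 + N(N_1^*)^2 + 6N N_1^*) \sim \tfrac{1}{2}(-\tfrac{8}{27} + \tfrac{4}{9})N^3 = \tfrac{1}{2}\cdot\tfrac{4}{27}N^3 = \tfrac{2}{27}N^3$, giving $R_T^1 \to \frac{4N^3}{6 \cdot \frac{2}{27}N^3} = \frac{4 \cdot 27}{12} = 9$. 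The same computation for $Z_2$ yields the identical leading coefficient, hence $R_T^2 \to 9$. For $Z_3$ the leading term is $\tfrac{1}{2}(-2(N_1^*)^3 + 2N(N_1^*)^2 - 4N N_1^* + 4N^2 N_1^*$-type terms$)$; carefully, $-2(\tfrac{2}{3})^3 + 2(\tfrac{2}{3})^2 = -\tfrac{16}{27} + \tfrac{8}{9} = \tfrac{8}{27}$, so $Z_3 \sim \tfrac{4}{27}N^3$ and $R_T^3 \to \frac{4N^3}{6\cdot\frac{4}{27}N^3} = \frac{4\cdot 27}{24} = 4.5$. So the limits supply the upper bounds $9, 9, 4.5$, provided I also verify (via monotonicity of the error terms, or direct estimation) that $R_T^i$ approaches these limits from below, i.e.\ stays $\le 9$ (resp.\ $\le 4.5$) for all finite $N \ge 2$.

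Then I would pin down the minimum over $N \ge 2$, which gives the lower bounds $2.4789$, $2.200$, $2.1477$. Here I expect the infimum to be attained at a small value of $N$ (likely $N=2$ or a nearby small integer, exactly as the analogous $L_3(2) \approx 2.1214$ computation in the proof of Theorem 2 suggests), so I would evaluate $R_T^i$ explicitly at $N = 2, 3, 4, \dots$ using the exact $N_1^*$ (computing the ceiling/floor honestly for these small cases) until the sequence is clearly increasing, and confirm it is increasing thereafter by showing the continuous relaxation of $R_T^i(N)$ has nonnegative derivative for $N$ past that threshold. Matching the numerics: $R_T^1$ bottoms out at $\approx 2.4789$, $R_T^2$ at $\approx 2.200$, $R_T^3$ at $\approx 2.1477$.

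The main obstacle will be handling the floor/ceiling functions in $N_1^*$ rigorously: $R_T^i(N)$ is not smooth because $N_1^*$ jumps, so the "monotone increasing past a threshold" claim must account for these jumps --- either by bounding the jump discontinuities (each changes $N_1^*$ by at most $1$, perturbing $Z_i$ by a controlled $O(N^2)$ amount) or by re-deriving $Z_i$ as a piecewise-smooth function and checking monotonicity on each piece. A secondary subtlety is confirming the approach-from-below direction for the upper bounds, which requires sign control on the $O(N^2)$ and $O(N)$ correction terms in $Z_i$ after the leading cubic; I would handle this by writing $6Z_i = \tfrac{4}{27}\text{(or }\tfrac{8}{27}\text{)}\,N^3 \cdot (1 + \varepsilon_i(N))$ with $\varepsilon_i(N) \to 0$ and showing $\varepsilon_i(N) > 0$ for $N \ge 2$, so that $R_T^i < 9$ (resp.\ $< 4.5$) strictly, matching the non-strict upper endpoints in the statement as suprema. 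The rest is bookkeeping with the quadratic-formula expressions, which I would not grind through here.
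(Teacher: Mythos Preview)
Your proposal follows essentially the same route as the paper's own proof in Appendix~D: compute the limit $\lim_{N\to\infty}R_T^i$ to obtain the upper endpoints $9,9,4.5$, evaluate $R_T^i$ at small $N$ to locate the lower endpoints, and invoke monotonicity to conclude. Your asymptotic computations (with $N_1^*\sim\tfrac{2}{3}N$ yielding $Z_1,Z_2\sim\tfrac{2}{27}N^3$ and $Z_3\sim\tfrac{4}{27}N^3$) are correct and match the paper's limits. The paper is in fact somewhat less careful than you are: it simply asserts that ``$R_T^1(N)$ increases monotonically'' on $[2,+\infty)$ (despite also reporting $R_T^1(2)=7>R_T^1(3)\approx 2.4789$, so the monotonicity can only hold from $N=3$ onward) and does not address the floor/ceiling issue or the approach-from-below direction at all, handling cases $i=2,3$ by the single word ``Similarly.'' Your attention to the jump discontinuities in $N_1^*$ and to the sign of the $O(N^2)$ correction in $Z_i$ goes beyond what the paper actually writes down.
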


\begin{proof}
See Appendix D.
\end{proof}

\section{PERFORMANCE COMPARISON}
In this section, we provide numerical simulations to demonstrate the superior performance of
TO-SDA with three different generators in terms of DOF, coupling leakage, redundancy and the RMSE versus the input SNR, snapshots and the number of sources.
Note that in all DOA estimations, the spatial smoothing MUSIC algorithm \cite{Pal22011}, \cite{Liu2015}, \cite{Piya2012},
\cite{You2021} is used to estimate DOA.
Moreover, we assume that all incident sources have equal power and the number of sources is known.
To evaluate the results quantitatively, the root-mean-square error (RMSE) of the estimation DOAs is defined as an
average over 1000 independent trials:

\begin{equation}\nonumber
\text{RMSE}=\sqrt{\frac{1}{1000D}\sum_{j=1}^{1000}\sum_{i=1}^{D}(\hat{\theta}_i^{j}-\theta_i)^2},
\end{equation}
where $\hat{\theta}_i^{j}$ is the estimate of $\theta_i$ for the $j^{th}$ trial. Similar to \cite{LiuCL2016},
we focus on the DOF obtained by different arrays,
rather than the array aperture, to investigate the overall estimation performance.

\subsection{Comparison of the DOF for Different Arrays}

We compare the DOF of the proposed method with those of TONA \cite{Sharma2023},
FL-NA \cite{Piya2012} and SE-FL-NA \cite{Shen2019} for given the fixed number of physical sensors,
where all other TO-SDAs adopt the array structure for obtaining maximum DOF.
The comparing variations of DOF for six methods are shown in Fig. 4.
It can be seen that the DOF of TO-SDA(TNA-II) are the highest among the six array structures.
Even compared to SE-FL-NA, it is still superior when $N\geq4$.
In addition, the DOF of TO-SDA(SCNA) are more than those of TONA, TO-SDA(CNA) and FL-NA for any number of sensors,
and its DOF are also more than those of SE-FL-NA when $N\leq14$.
\begin{figure}
 \center{\includegraphics[width=6cm]  {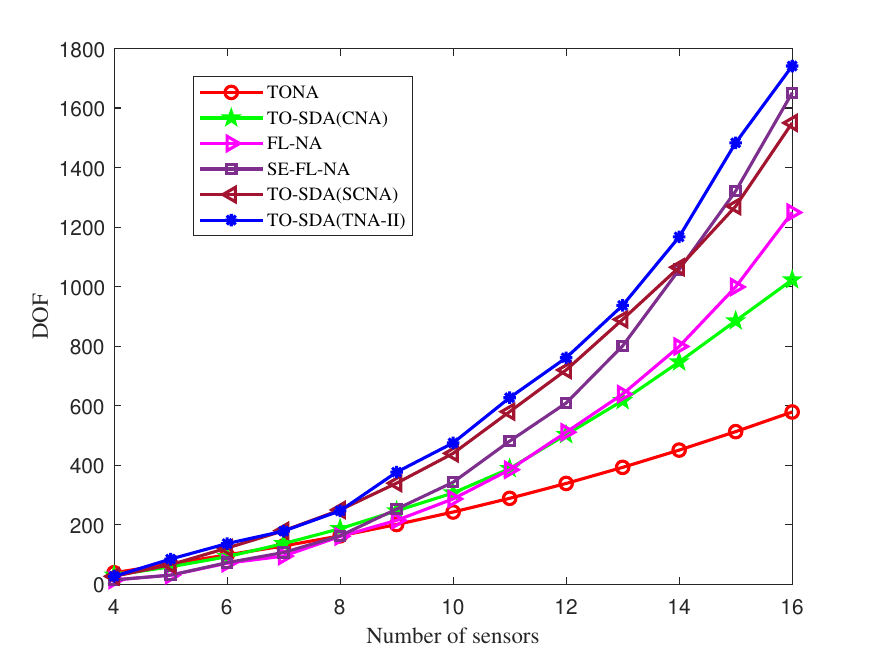}}
 \caption{\label{1} DOF of different arrays}
\end{figure}

\subsection{Redundancy of Different Array Structures}

The redundancy is an important indicator to measure whether the DOF of the current array structure can be further enhanced,
which is compared among TONA and the three TO-SDA designed based on different generators in the simulation.
Redundancies of two arrays with the varying of the number of sensors are shown on the Fig. 5.
It can be seen that the redundancy of TONA is smaller than those of other three TO-SDAs when the number of sensors is less than 8.
While the number of sensors is more than 8, the redundancy of the proposed TO-SDA(TNA-II) is much smaller than those of other three arrays.
\begin{figure}[h]
 \center{\includegraphics[width=6cm]  {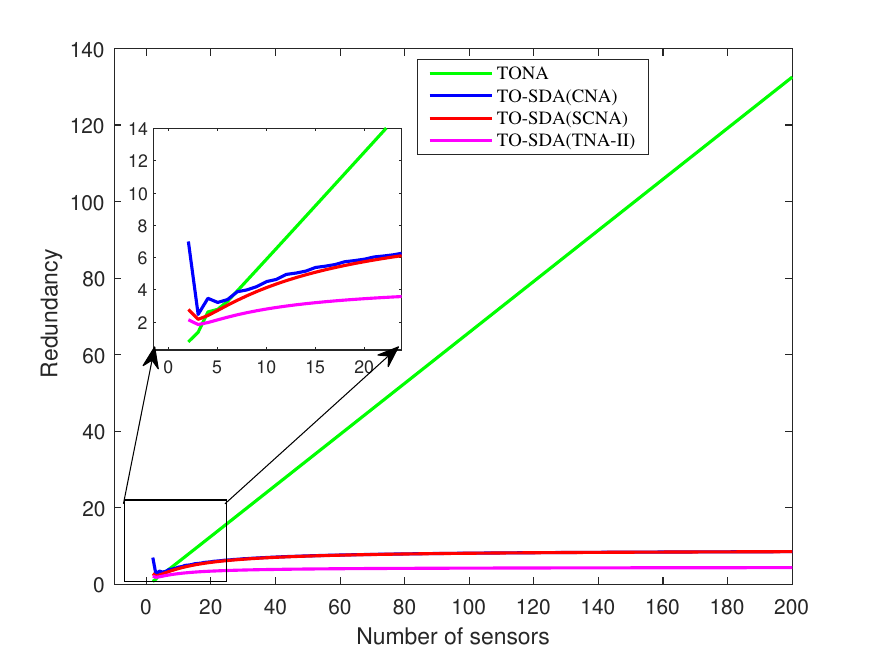}}
 \caption{\label{1} Redundancy of different arrays}
\end{figure}

\subsection{Coupling Leakage}
The mutual coupling performance of the proposed co-array is compared with those of state-of-the-art co-arrays, FL-NA, SE-FL-NA and TONA,
in terms of coupling leakages in this section.
Specifically, the mutual coupling model (\ref{wang5}) is characterized by
$c_1=0.3e^{j\pi/3}$, $B=100$ and $c_l=c_1e^{-j(l-1)\pi/8}/l$, for $2\leq l \leq B$.

Firstly, the coupling leakages are calculated by (\ref{w24}) for different configurations with the number of physical sensors
as 9, 10, 11, 19, 21 and 23, respectively.
The results of the coupling leakages of different structure for six arrays are listed in Table I.
It can be seen that the coupling leakage $L$ of TO-SDA(TNA-II) is smaller than those of other five arrays with any number of sensors,
where the reason is that there exists less physical sensors with unit inter-spacing in TO-SDA(TNA-II)
than those of five arrays, resulting in decreased coupling leakage of TO-SDA(TNA-II).
In addition, the performance of the coupling leakages of TO-SDA(CNA) and TO-SDA(SCNA) are better than that of TONA,
while they are worse than those of FL-NA and SE-FL-NA.
Therefore, the proposed TO-SDA(TNA-II) outperforms other five arrays under any number of physical sensors in terms of coupling leakage.

\begin{table}
\label{tab2}
\begin{center}
\caption{A SUMMARY OF MUTUAL COUPLING LEAKAGE FOR THREE ARRAY STRUCTURES}
\setlength{\tabcolsep}{0pt}
\renewcommand{\arraystretch}{1.5} 
\begin{tabular}{ c  c  c  c  c  c c}
\hline
\hline
\textbf{Array}& \textbf{ \small{FL-NA} }& \textbf{ \small{SE-FL-NA} } & \textbf{\small{TONA}} & \textbf{\small{TO-SDA}} & \textbf{\small{TO-SDA}} & \textbf{\small{TO-SDA}} \\
\textbf{config.} &                  &                 &               & \textbf{\small{(CNA)}} & \textbf{\small{(SCNA)}} & \textbf{\small{(TNA-II)}} \\
\hline
\small{9 Sensors}   & \small{(3,3,3,3)} & \small{(3,3,3,2)} & \small{(6,3)}  & \small{(6,3)} &\small{(6,3)} &\small{(6,3)} \\
$L$          & \small{0.2263}   & \small{0.2257} & \small{0.3395} & \small{0.2477} &\small{0.2161} &\small{\textbf{0.1957}} \\
\hline
\small{10 Sensors}    & \small{(4,3,3,3)} &\small{(3,3,3,3)} & \small{(6,4)}  & \small{(7,3)} &\small{(7,3)} &\small{(6,4)} \\
$L$          & \small{0.2563}     &\small{0.2147} & \small{0.3240} & \small{0.2895} &\small{0.2664} &\small{\textbf{0.1860}}\\
\hline
\small{11 Sensors}   & \small{(4,4,3,3)} & \small{(4,3,3,3)} & \small{(7,4)}   & \small{(6,5)} &\small{(6,5)} & \small{(7,4)} \\
$L$          & \small{0.2477}    &\small{0.2449} & \small{0.3403}  & \small{0.2771} &\small{0.2529} &\small{\textbf{0.1817}} \\
\hline
\small{19 Sensors}   & \small{(6,6,5,5)} &\small{(6,5,5,5)} & \small{(11,8)}  & \small{(10,9)} &\small{(10,9)} &\small{(12,7)}\\
$L$          & \small{0.2460}    &\small{0.2452} & \small{0.3436}  & \small{0.2646} &\small{0.2494} & \small{\textbf{0.2280}}\\
\hline
\small{21 Sensors}   & \small{(6,6,6,6)} &\small{(6,6,6,5)} & \small{(12,9)}  &  \small{(11,10)} &\small{(11,10)} &\small{(14,7)}\\
$L$          & \small{0.2347}    &\small{0.2346} & \small{0.3442}  & \small{0.2554}   &\small{0.2410} &\small{\textbf{0.2393}} \\
\hline
\small{23 Sensors}   & \small{(7,7,6,6)} &\small{(7,6,6,6)} & \small{(15,8)}  &  \small{(12,11)} &\small{(12,11)} &\small{(15,8)}\\
$L$          & \small{0.2464}    &\small{0.2459} & \small{0.3710}  & \small{0.2776}   &\small{0.2657} &\small{\textbf{0.2296}}\\
\hline
\hline
\end{tabular}
\end{center}
\end{table}

Secondly, the visualizations of mutual coupling matrices for different arrays are shown in Fig. 6,
with the number of antennas set as 19,
where the darker blue color corresponds to the smaller value of the non-diagonal elements
in mutual coupling matrices and yellow color represents the values of diagonal elements of matrices.
It is observed in Fig. 6 that the higher mutual coupling of TONA, TO-SDA and FL-NA occurs in the dense sub-array part,
i.e., the nested array, whereas higher mutual coupling of TO-SDA is concentrated at both ends of the
$\mathcal{G}$, where the senor spacing is denser.

\begin{figure}
  \centering
  \subfigure[]{
    \label{fig:subfig:threefunction}
    \includegraphics[scale=0.18]{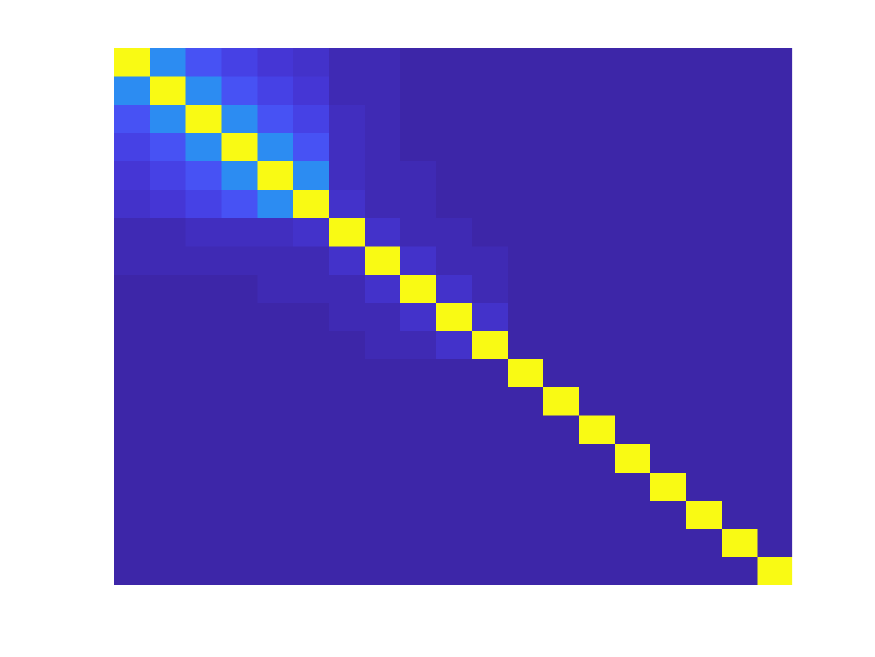}}
    \hspace{0in} 
     \subfigure[]{
    \label{fig:subfig:threefunction}
    \includegraphics[scale=0.18]{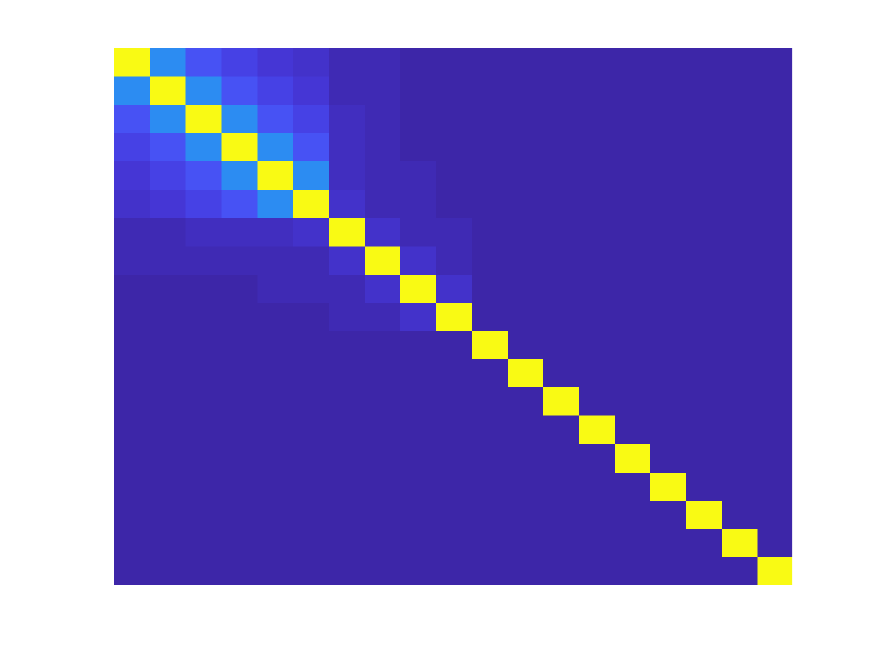}}
    \hspace{0in} 
     \subfigure[]{
    \label{fig:subfig:onefunction}
    \includegraphics[scale=0.18]{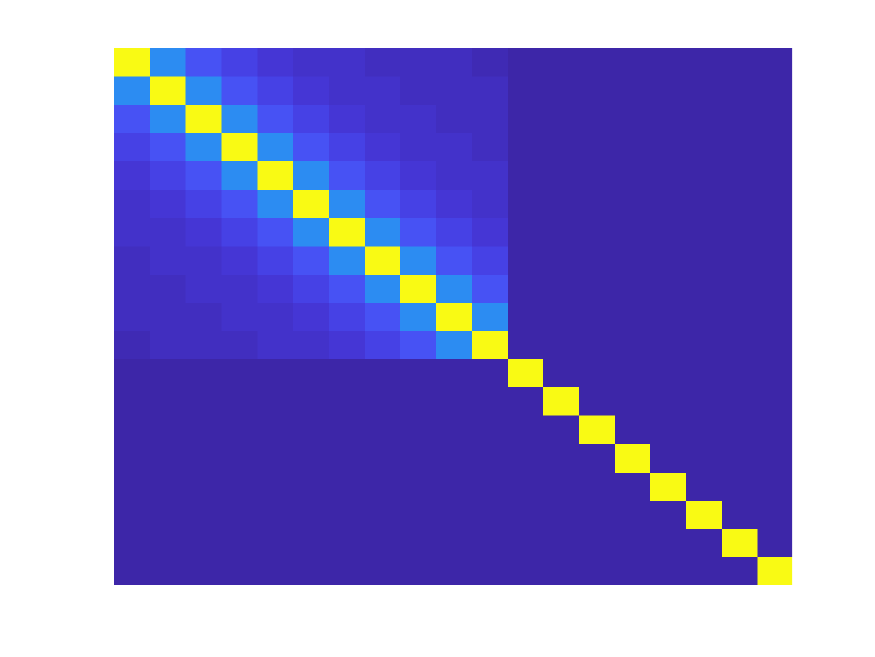}}
  \hspace{0in} 
  \subfigure[]{
    \label{fig:subfig:threefunction}
    \includegraphics[scale=0.18]{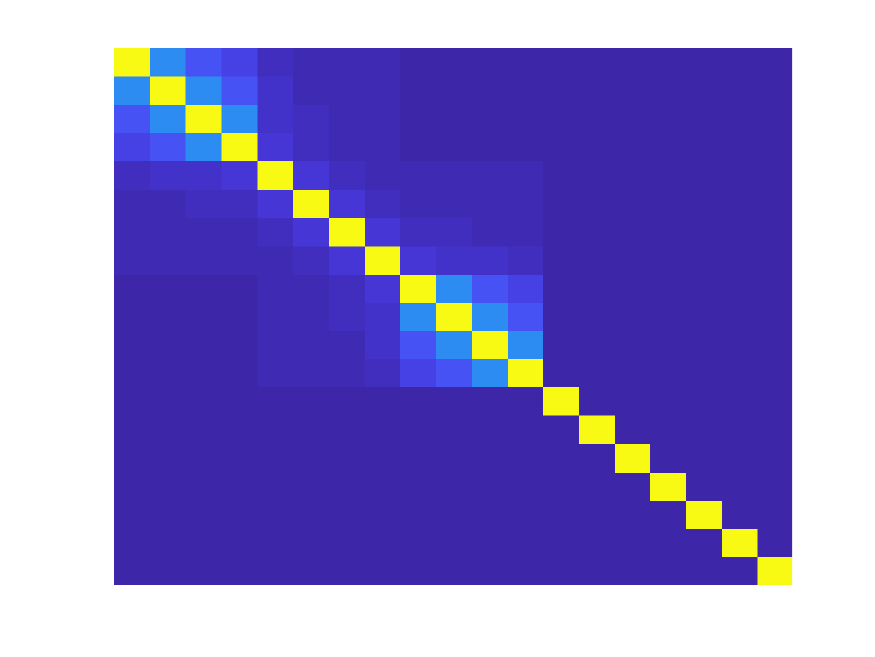}}
  \hspace{0in} 
  \subfigure[]{
    \label{fig:subfig:threefunction}
    \includegraphics[scale=0.18]{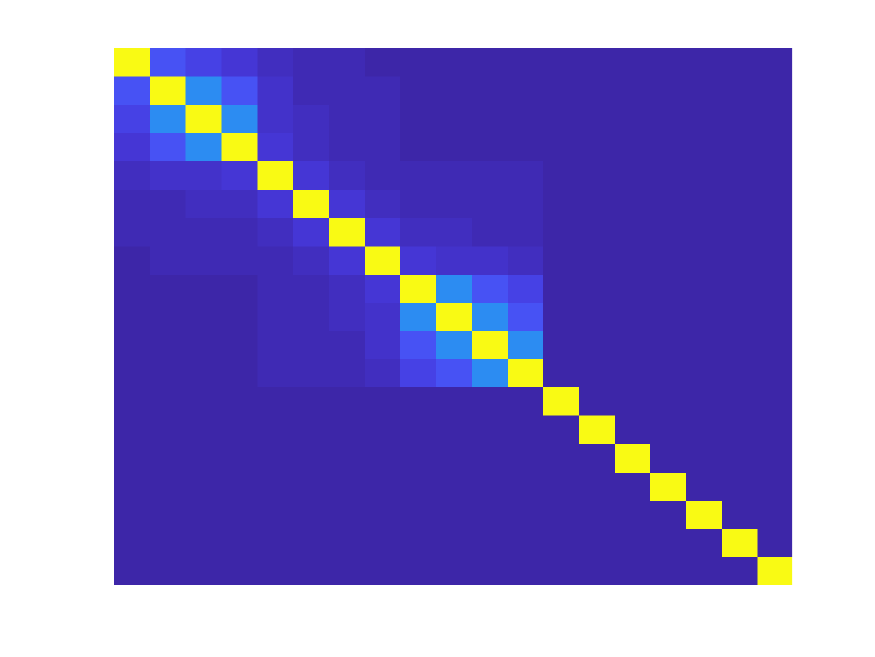}}
  \hspace{0in} 
    \subfigure[]{
    \label{fig:subfig:threefunction}
    \includegraphics[scale=0.18]{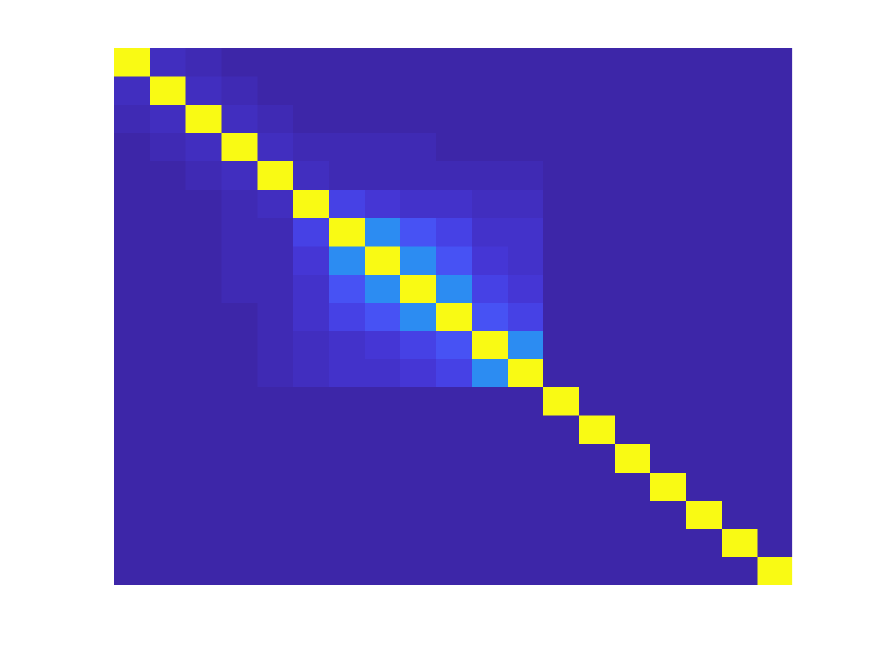}}
    \hspace{0in} 

  \caption{The magnitudes of the mutual coupling matrices of five arrays with 19-sensors. (a) FL-NA. (b) SE-FL-NA. (c) TONA. (d) TO-SDA(CNA). (e) TO-SDA(SCNA). (f) TO-SDA(TNA-II).}
\end{figure}

\subsection{Resolution of Different Array Structures}
\begin{figure}[h]
  \centering
  \subfigure[]{
    \label{fig:subfig:onefunction}
    \includegraphics[scale=0.146]{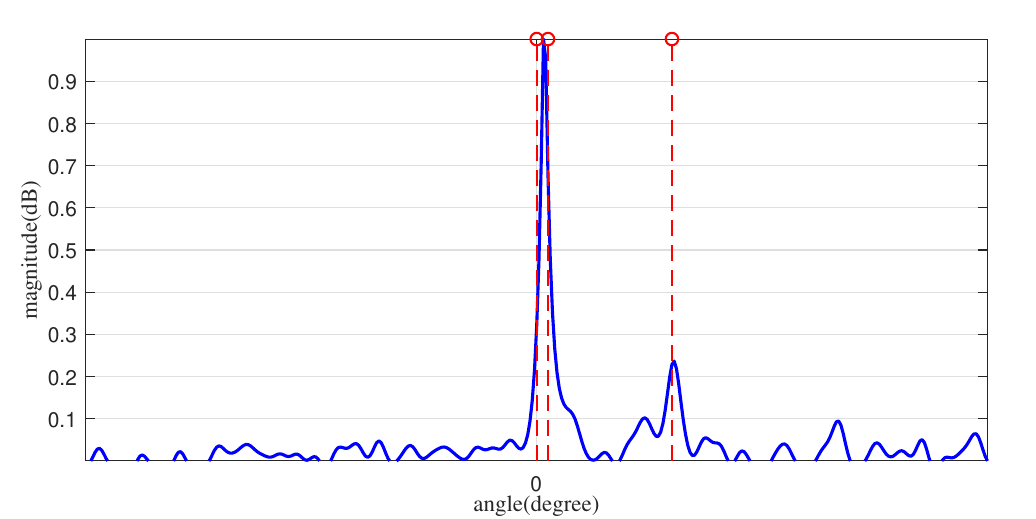}}
  \hspace{0in} 
  \subfigure[]{
    \label{fig:subfig:onefunction}
    \includegraphics[scale=0.146]{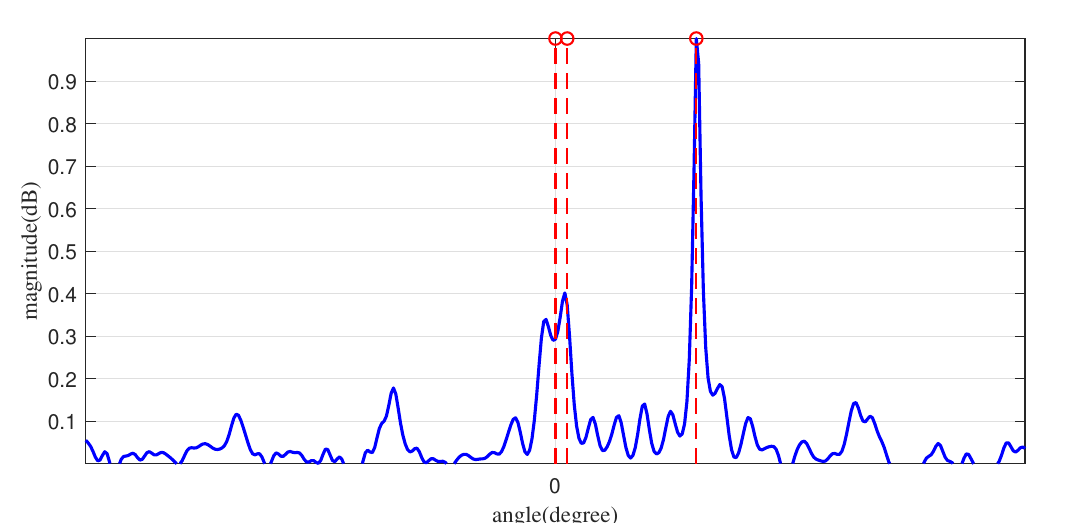}}
  \hspace{0in} 
  \subfigure[]{
    \label{fig:subfig:onefunction}
    \includegraphics[scale=0.146]{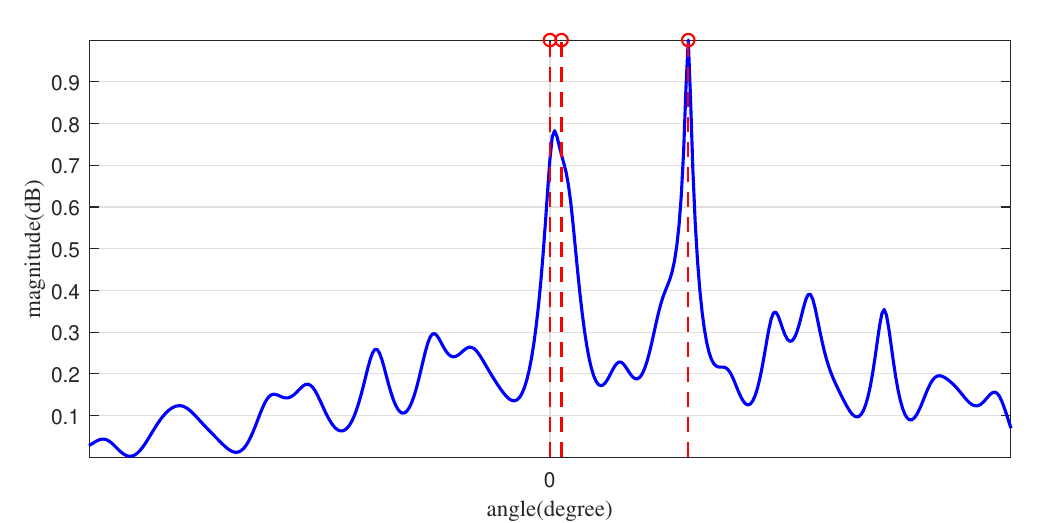}}
  \hspace{0in} 
  \subfigure[]{
    \label{fig:subfig:threefunction}
    \includegraphics[scale=0.15]{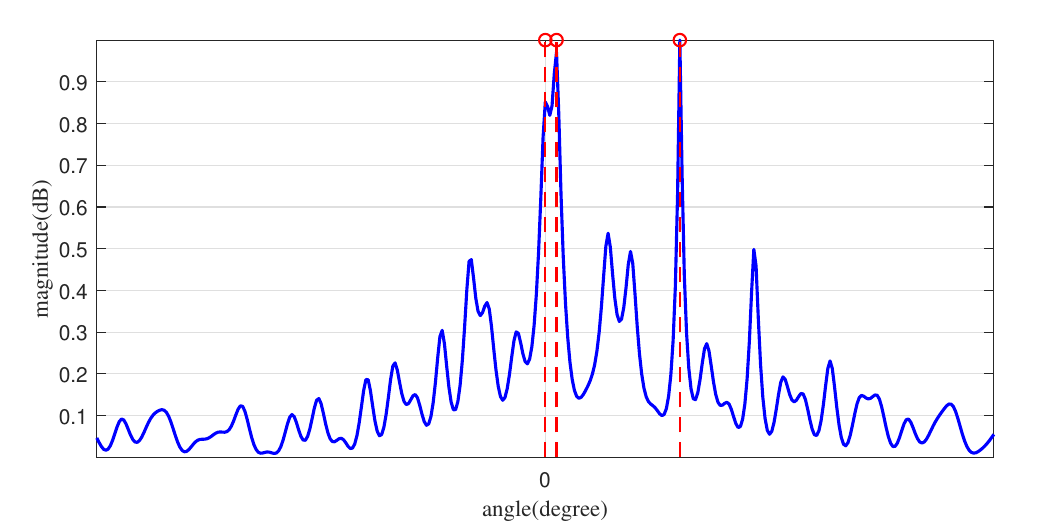}}
  \subfigure[]{
    \label{fig:subfig:threefunction}
    \includegraphics[scale=0.15]{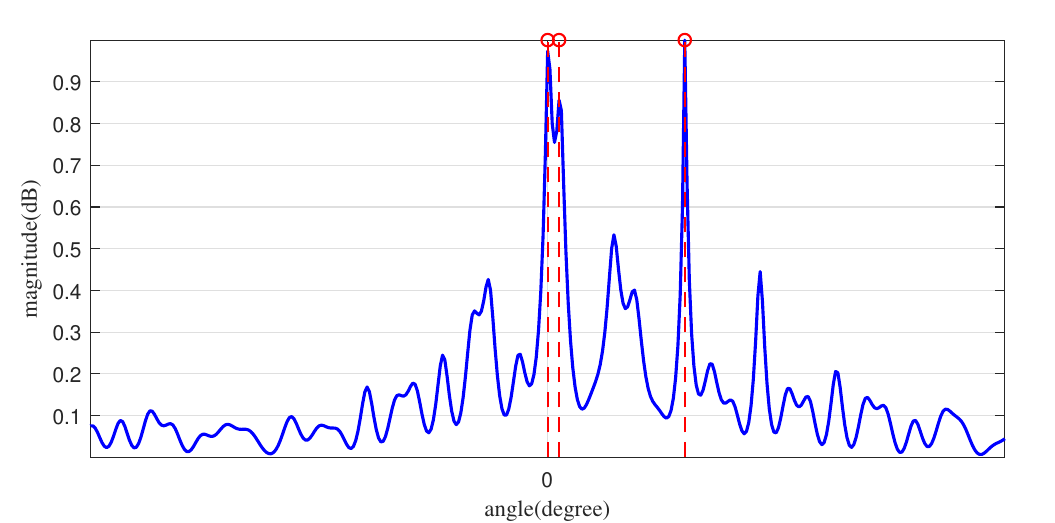}}
  \subfigure[]{
    \label{fig:subfig:threefunction}
    \includegraphics[scale=0.15]{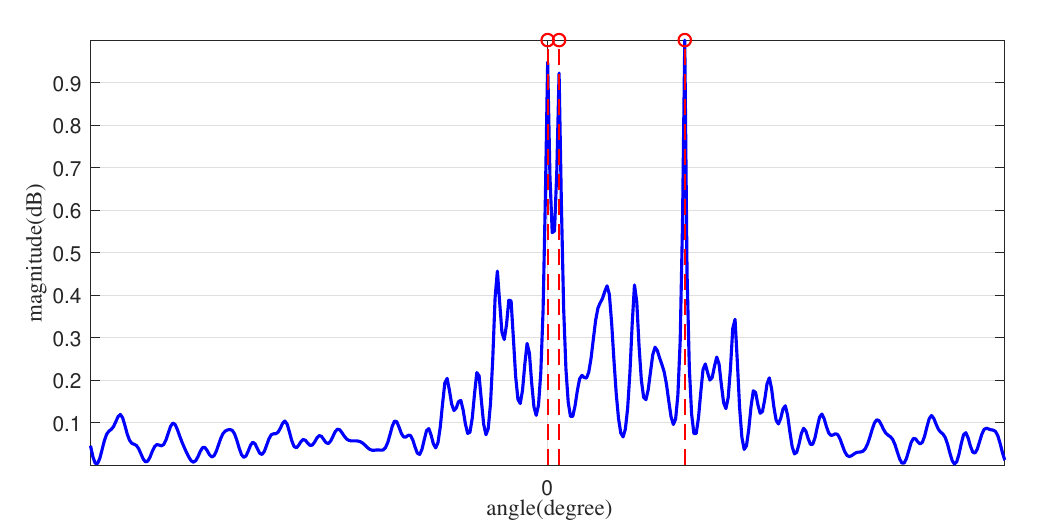}}
  \caption{The special case of DOA estimation result for five arrays with 9-sensors when sources are located at $0^{\circ}$, $0.5^{\circ}$ and $6^{\circ}$.
$SNR = 0$ dB and $K = 10000$. (a) FL-NA. (b) SE-FL-NA. (c) TONA. (d) TO-SDA(CNA). (e) TO-SDA(SCNA). (f) TO-SDA(TNA-II) }
\end{figure}

The resolution is an important metric of the performance for DOA estimation,
wherein high resolution representation enables precise detection in medical imaging.
Further, the resolution is compared among FL-NA, SE-FL-NA, TONA, TO-SDA(CNA), TO-SDA(SCNA)
and TO-SDA(TNA-II) in the simulation. 9 physical sensors are used to construct six arrays,
while the angle of one source fixed at $0^{\circ}$, and the angle of other sources vary from 0 to 10.
The SNR and snapshots are set as 0 dB and 10000, respectively.
The angles of 3 uncorrelated sources are $0^{\circ}$, $0.5^{\circ}$ and $6^{\circ}$ respectively in Fig 7.
It can be seen that TO-SDA(SCNA) and TO-SDA(TNA-II) both can distinguish two sources with the difference of incidence angles reduced to $0.5^{\circ}$, while others can not.
However, TO-SDA(TNA-II) exhibits lower valley than that of TO-SDA(SCNA).

\subsection{DOA Estimation Based on Third-Order Cumulants Without Mutual Coupling}
We compare the DOA estimation performance without mutual coupling versus input SNR, snapshots and the number of sources for the three TO-SDAs to
those of TONA, FL-NA and SE-FL-NA in this part,
where 9 physical sensors are used to construct six arrays.

%
%

The RMSE versus the input SNR, snapshots and the number of sources are studied in the following numerical simulations.
In the first numerical simulation, there are 12 uncorrelated sources uniformly located at $-60^{\circ}$ to $60^{\circ}$
and the snapshots setting as 12000. The SNR ranges from -10dB to 10dB with an interval of 2dB.
The results of RMSE versus SNR for different arrays are shown in Fig. 8(a),
where it can be seen that as the SNR increases, the RMSEs of all arrays decrease,
however the RMSE of TO-SDA(TNA-II) remaining the lowest.
The second numerical simulation studies the DOA estimation performance with respect to the snapshots changing from 8000 to 18000
and the SNR setting as 2dB.
The results of RMSE versus snapshots are shown in Fig. 8(b), and a similar conclusion can be obtained that
the RMSE of TO-SDA(TNA-II) is lower than those of other five arrays.
In the third numerical simulation, the number of sources change from 23 to 28.
The results of RMSE versus the number of sources are shown in Fig. 8(c),
where it can be seen that as the number of sensors increases, the RMSE of TO-SDA(TNA-II) remains the smallest compared to other five arrays, indicating its superior performance.
\begin{figure}
  \centering
  \subfigure[ RMSE versus SNR]{
    \label{fig:subfig:onefunction}
    \includegraphics[scale=0.18]{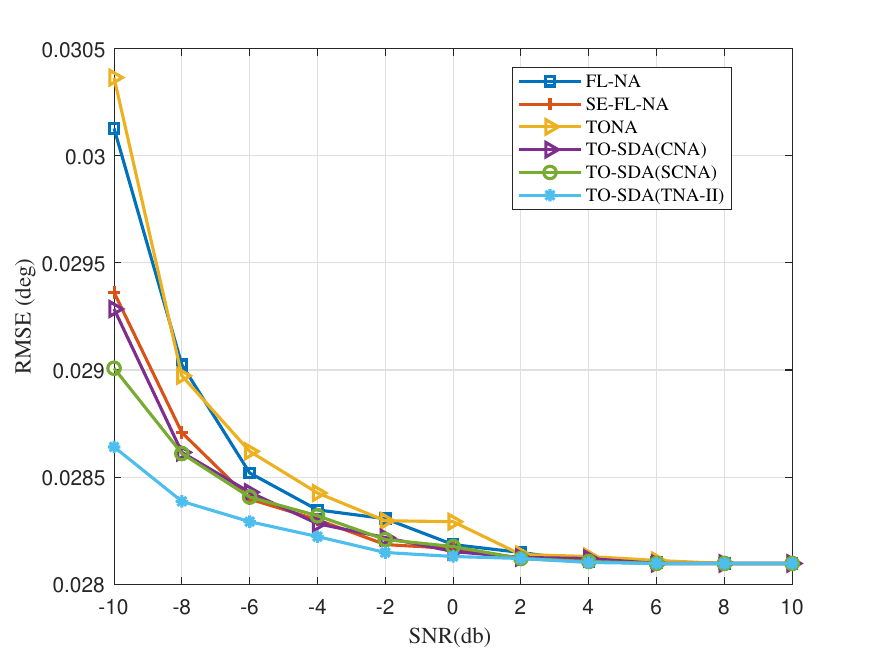}}
   \hspace{0in} 
  \subfigure[RMSE versus Snapshots]{
    \label{fig:subfig:threefunction}
    \includegraphics[scale=0.18]{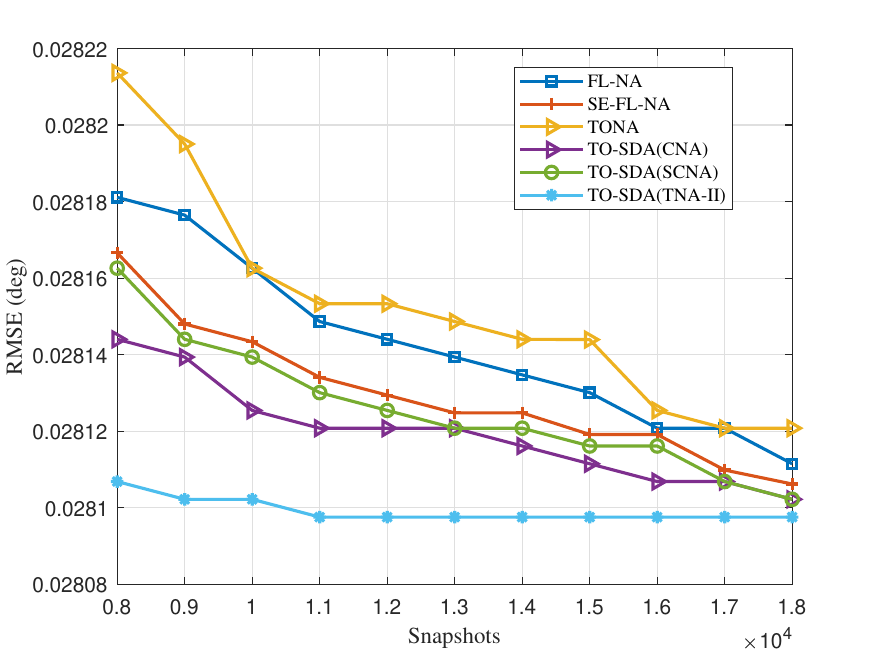}}
    \hspace{0in} 
  \subfigure[RMSE versus Number of Sources]{
    \label{fig:subfig:threefunction}
    \includegraphics[scale=0.18]{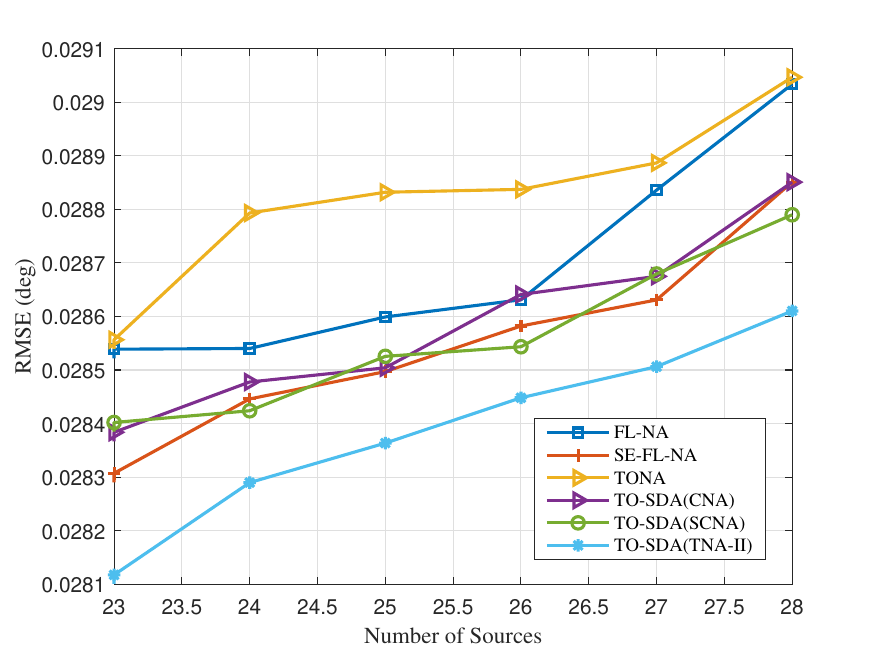}}
  \caption{DOA estimation performance without mutual coupling based on third-order cumulants}
\end{figure}

\subsection{DOA Estimation Based on Third-Order Cumulants With Mutual Coupling}
We compare the DOA estimation performance with mutual coupling versus input SNR, snapshots and the number of sources for the three TO-SDCAs to
those of TONA, FL-NA and SE-FL-NA in this part,
where 9 physical sensors are used to construct six arrays.

The RMSE versus the input SNR, snapshots and the number of sources are studied in the following numerical simulations.
In the first numerical simulation, there are 12 uncorrelated sources uniformly located at $-60^{\circ}$ to $60^{\circ}$
and the snapshots setting as 12000. The SNR ranges from -10dB to 10dB with an interval of 2dB.
The results of RMSE versus SNR for different arrays are shown in Fig. 9(a),
where it can be seen that as the SNR increases, the RMSEs of all arrays decrease,
however the RMSE of TO-SDA(TNA-II) remaining the lowest.
The second numerical simulation studies the DOA estimation performance with respect to the snapshots changing from 8000 to 18000
and the SNR setting as 2dB.
The results of RMSE versus snapshots are shown in Fig. 9(b), and a similar conclusion can be obtained that
the RMSE of TO-SDA(TNA-II) is lower than those of TONA and FL-NA.
In the third numerical simulation, the number of sources change from 23 to 28.
The results of RMSE versus the number of sources are shown in Fig. 9(c),
where it can be seen that as the number of sensors increases, the RMSE of SE-FL-NA, TO-SDA(CNA) and TO-SDA(SCNA) increase steeply
than those of other three arrays.
Notably, the RMSE of TO-SDA(TNA-II) remains the smallest compared to other five arrays, indicating its superior performance with respect to the coupling effects.
\begin{figure}
  \centering
  \subfigure[ RMSE versus SNR]{
    \label{fig:subfig:onefunction}
    \includegraphics[scale=0.18]{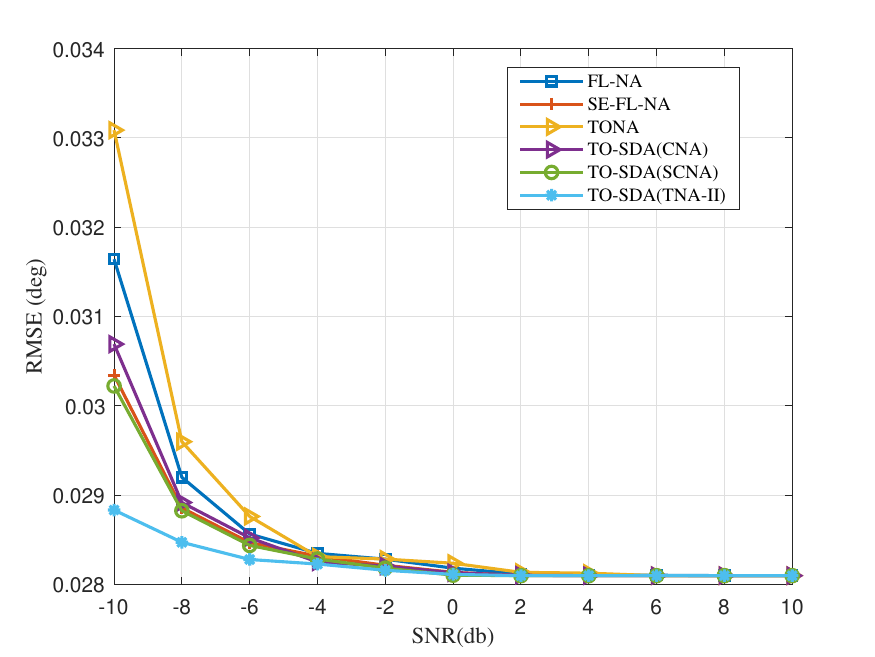}}
   \hspace{0in} 
  \subfigure[RMSE versus Snapshots]{
    \label{fig:subfig:threefunction}
    \includegraphics[scale=0.18]{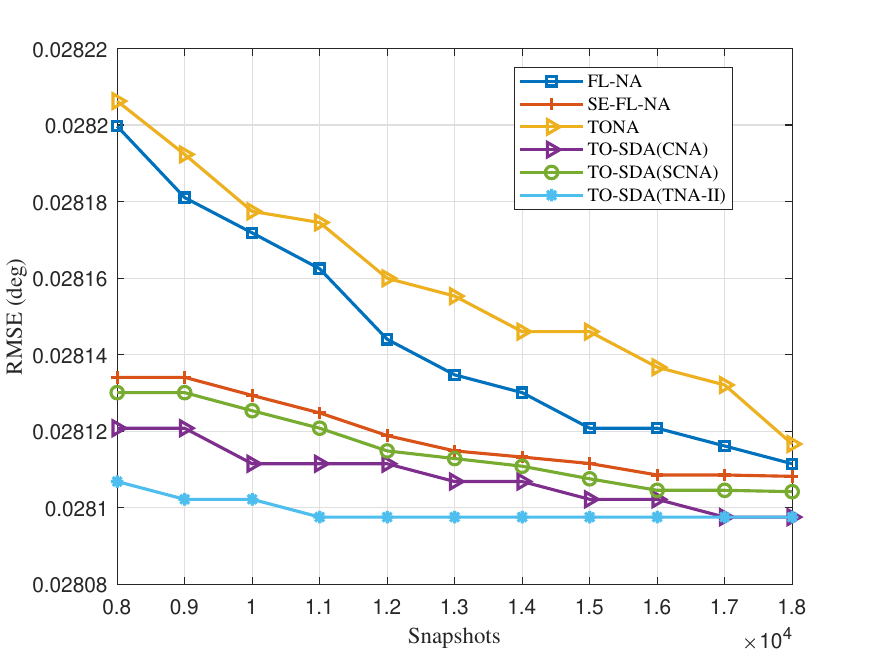}}
    \hspace{0in} 
  \subfigure[RMSE versus Number of Sources]{
    \label{fig:subfig:threefunction}
    \includegraphics[scale=0.18]{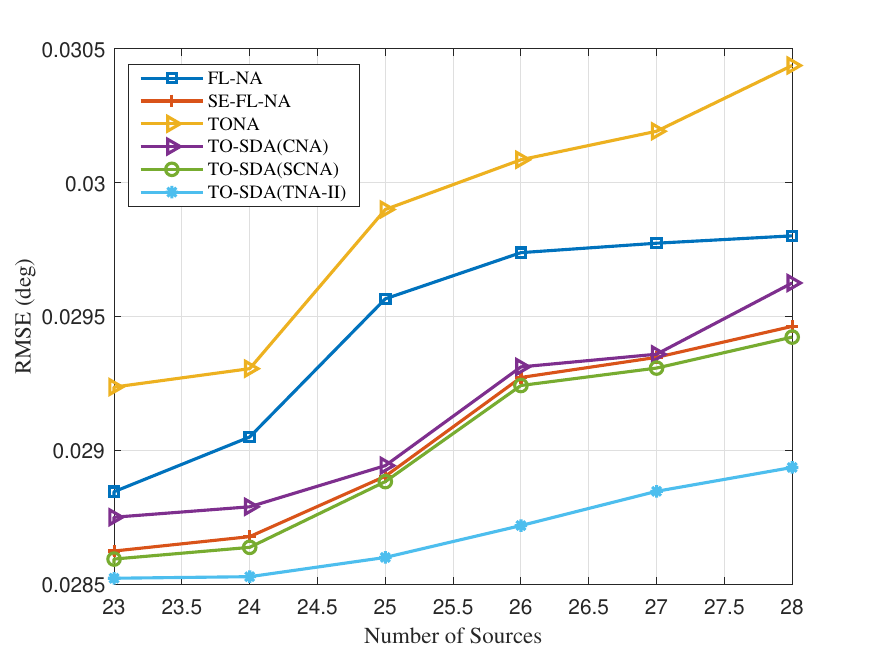}}
  \caption{DOA estimation performance with mutual coupling based on third-order cumulants}
\end{figure}

\section{Conclusion}
The paper devise a novel method to design a sparse sensor arrays, namely GTOA with any generator based on the TO-ECA,
and the redundancy of TO-ECA is defined.
Further, three novel SLA with different fixed generators, namely TO-SDA(CNA), TO-SDA(SCNA) and TO-SDA(TNA-II) are proposed
and the redundancy of the three TO-SDAs are defined.
Specifically, the three TO-SDAs all consist of two sub-arrays with a given number of sensors.
Sub-array 1 and 2 are capable to produce a hole-free sum and difference co-arrays with more consecutive lags.
Based on the criterion, sub-array 1 is selected as CNA, SCNA and TNA-II,
and 2 is a sparse linear array with large inter-spacing between two physical sensors.
This arrangement can yield closed-form expressions for the antenna positions of the proposed three TO-SDAs.
Therefore, based on this design, the proposed three TO-SDAs offers significantly larger DOF than those of other existing arrays.
Especially, the DOF of TO-SDA(TNA-II) are even more than those of SE-FL-NA, and its redundancy and coupling leakage are less than those of SE-FL-NA.
The enhanced DOF increase the number of resolvable sources with the given number of physical sensors.
Meanwhile, compared to existing TONA, FL-NA and SE-FL-NA, TO-SDA(TNA-II) exhibits lower coupling effects.
In a short, the novel three sparse sensor arrays note only can enhance the DOF and reduce redundancy with the same number of sensors to reduce hardware costs,
but also can increase the array aperture to realize higher resolution,
and reduce the mutual coupling effects to decrease the mutual interference between sensors,
and improve the DOA estimation performance.

\section*{Acknowledgment}
This work was supported by the National Natural Science Foundation of China (Grant No. 62371400).\\

{\appendices

\section{The Proof of Lemma 2}
It follows from Eq. (\ref{wh1}) that the DOF of the TO-SDA is
\begin{equation}\nonumber
\text{DOF}=(6+8N_2)[(M_1-1)+M_2(M_1+1)]+2N_2+1.
\end{equation}

Furthermore, the total number of antennas is
\begin{equation}\nonumber
N=2M_1+M_2+N_2.
\end{equation}

Therefore, to obtain the maximum DOF of TO-SDA, it is equivalent to solve the following optimization problem
\begin{equation}\nonumber
\begin{aligned}
&\underset{M_1,M_2,N_2\in \mathbb{N}_+}{maximize} (6+8N_2)[(M_1-1)+M_2(M_1+1)]+2N_2+1,\\
&subject\ to\ 2M_1+M_2+N_2=N. \ \ \ \ \ \ \ \ \ \ \ \ \ \ \ \ \ \ \ \ \ \ \ \ \ (P1)
\end{aligned}
\end{equation}

Firstly, assuming $N_1$ for $\mathcal{G}$ is known, to obtain the maximum DOF $\mathcal{G}$ based on SCA,
variables $M_1$ and $M_2$ are set as \cite{Robin2017}
\begin{equation}\nonumber
M_1=\lceil\frac{N_1-1}{4}\rfloor, M_2=N_1-2\lceil\frac{N_1-1}{4}\rfloor.
\end{equation}

Further, substituting $M_1$, $M_2$ and $N_2=N-N_1$ into objective function, we can get
\begin{equation}
\label{ap1}
\begin{aligned}
\text{DOF}\sx &=\sx (6+8(N-N_1))[(\lceil\frac{N_1-1}{4}\rfloor-1)+(N_1-2\lceil\frac{N_1-1}{4}\rfloor)\\
&\ \ \ \ \times(\lceil\frac{N_1-1}{4}\rfloor+1)]+2(N-N_1)+1.
\end{aligned}
\end{equation}

Although (P1) is an problem with no general closed-form solution,
the maximum possible DOF maybe found under the relaxation that $M_1,M_2,N_2\in \mathbb{R}_+$.
At this time, (\ref{ap1}) can be changed as follows form
\begin{equation}\nonumber
\begin{aligned}
f_1(N_1)&\triangleq\text{DOF}\\
&=-N_1^3+(N-\frac{21}{4})N_1^2+(6N+\frac{19}{2})N_1-\frac{17}{4}-5N.
\end{aligned}
\end{equation}

For the function $f_1(N_1)$, when the total number of physical sensors $N$ is given, the unknown parameter is only $N_1$,
which is a cubic function $f_1(N_1)$ about $N_1$.
There may be a local maximum for $f_1(N_1)$ within the independent variable interval $[0, N]$.
To discuss the local maximum of $f_1(N_1)$ and obtain the stationary point of $f_1(N_1)$,
the first derivative of $f_1(N_1)$ is derived firstly as follows
\begin{equation}\nonumber
\frac{\partial f_1(N_1)}{\partial N_1}=\frac{19}{2}-\frac{21}{2}N_1+6N+2NN_1-3N_1^2.
\end{equation}

Therefore, the stationary points of $f_1(N_1)$ can be derived as follows when $\frac{\partial f_1(N_1)}{\partial N_1}=0$
\begin{equation}\nonumber
\begin{aligned}
&N_{11},N_{12}=\frac{4N\pm\sqrt{(4N+15)^2+672}-21}{12}.
\end{aligned}
\end{equation}

Since $N_{11}$ and $N_{12}$ represent the number of sensors, that means they are greater than 0.
When $N_{12}=\frac{4N-\sqrt{(4N+15)^2+672}-21}{12}<0$ is not satisfied the condition,
thus only $N_{11}$ is considered for the local maximum of $f_1(N_1)$.

Furthermore, to further derive the local maximum of $f_1(N_1)$,
we compare the results of $f_1(N_1)$ at the endpoints of interval $[0,N]$ and stationary point $N_{11}$, which are calculated as follows
\begin{equation}\nonumber
\begin{aligned}
&f_1(0)=-\frac{17}{4}-5N<0,\
f_1(N)=\frac{3}{4}N^2+\frac{N}{2}-\frac{17}{4},\\
&f_1(N_{11})=\frac{2N^3}{27}+(\frac{N^2}{54}+\frac{5N}{36}+\frac{299}{288})\sqrt{(4N+15)^2+672}\\
&\ \ \ \ +\frac{5N^2}{6}-\frac{149N}{24}-\frac{1011}{32}.
\end{aligned}
\end{equation}

Since $f_1(N_1)$ represents DOF, it must be greater than 0 that the case of $f(0)$ is not considered.
Further, to obtain the local maximum, the relationship between the results of $f_1(N)$ and $f_1(N_{11})$ is compared as follows.
To lighten the notations $\Lambda(N)\triangleq16N^2 + 120N + 897$
\begin{equation}\nonumber
\begin{aligned}
&\Delta(N)=f_1(N_{11})-f_1(N)\\
&=-\frac{257N}{24}+\frac{N^2}{12}-\frac{875}{32}+ (\frac{5N}{36}+\frac{N^2}{54}+\frac{299}{288})\sqrt{(\Lambda(N))},\\
&\frac{\partial \Delta(N)}{\partial N}=\frac{2N^2}{9}+\frac{N}{6}-\frac{257}{24}+(\frac{5}{36}+\frac{N}{27})\sqrt{(\Lambda(N))}\\
&\ \ \ \ +(\frac{5N}{72}+\frac{N^2}{108}+\frac{299}{576})\frac{32N+120}{\sqrt{(\Lambda(N))}}.
\end{aligned}
\end{equation}

Since $N\geq2$, we can get the follow inequality
\begin{equation}\nonumber
\begin{aligned}
\frac{\partial \Delta(N)}{\partial N}\geq \frac{236}{891}N^2+\frac{331}{198}N-\frac{4601}{1188}.
\end{aligned}
\end{equation}

When $\frac{\partial \Delta(N)}{\partial N}=0$, the solutions are $N_{21}\approx1.8021$, $N_{22}\approx-8.1136$.
Since $N\geq2$, only $N_{21}\approx1.8021$ is considered.
According to the properties of quadratic functions, it can be inferred that $\frac{\partial \Delta(N)}{\partial N}\geq 0$ when $N\geq N_{21}$.
At this time, $\Delta(N)$ is a monotonically increasing function when $N\geq 2$.
In addition, $\Delta(1)\approx 0.5323>0$, $\Delta(2)\approx 0.3382>0$,
therefore $\Delta(N)>0$ for all $N\in \mathbb{N}^+$, which means $f_1(N_{11})>f_1(N)$.
Thus, the local maximum solution of $f_1(N_1)$ for $N_1\in[0, N]$
is obtained at the stationary point $N_1^*\triangleq N_{11}=\frac{4N+\sqrt{(4N+15)^2+672}-21}{12}$.
Further, $N_1^*=\lceil\frac{4N+\sqrt{(4N+15)^2+672}-21}{12}\rfloor$ due to $N_1^*$ representing the number of sensors.
And the maximum DOF is $-(N_1^*)^3+(N-\frac{21}{4})(N_1^*)^2+(6N+\frac{19}{2})N_1^*-\frac{17}{4}-5N$.

\section{The Proof of Lemma 4}
It follows from Eq. (\ref{wh1}) that the DOF of the TO-SDA(SCNA) is
\begin{equation}\nonumber
\text{DOF}=(6+8N_2)[M_1+M_2(M_1+1)]+2N_2+1.
\end{equation}

Furthermore, the total number of sensors is
\begin{equation}\nonumber
N=2M_1+M_2+N_2.
\end{equation}

Therefore, to obtain the maximum DOF of TO-SDA(SCNA), it is equivalent to solve the following optimization problem
\begin{equation}\nonumber
\begin{aligned}
&\underset{M_1,M_2,N_2\in \mathbb{N}_+}{maximize} (6+8N_2)[M_1+M_2(M_1+1)]+2N_2+1,\\
&subject\ to\ 2M_1+M_2+N_2=N. \ \ \ \ \ \ \ \ \ \ \ \ \ \ \ \ \ \ \ \ \ \ \ \ \ (P2)
\end{aligned}
\end{equation}

Firstly, assuming $N_1$ for $\mathcal{G}$ is known, to obtain the maximum DOF $\mathcal{G}$ based on SCA,
variables $M_1$ and $M_2$ are set as \cite{Robin2017}
\begin{equation}\nonumber
M_1=\lceil\frac{N_1-1}{4}\rfloor, M_2=N_1-2\lceil\frac{N_1-1}{4}\rfloor.
\end{equation}

Further, substituting $M_1$, $M_2$ and $N_2=N-N_1$ into objective function, we can get
\begin{equation}
\label{ap3}
\begin{aligned}
\text{DOF}&=(6+8(N-N_1))[(\lceil\frac{N_1-1}{4}\rfloor-1)+(N_1-2\lceil\frac{N_1-1}{4}\rfloor)\\
&\ \ \ \ \times(\lceil\frac{N_1-1}{4}\rfloor+1)]+2(N-N_1)+1.
\end{aligned}
\end{equation}

Although (P2) is an problem with no general closed-form solution,
the maximum possible DOF maybe found under the relaxation that $M_1,M_2,N_2\in \mathbb{R}_+$.
At this time, (\ref{ap3}) can be changed as follows form
\begin{equation}\nonumber
\begin{aligned}
f_2(N_1)&\sx \triangleq\sx \text{DOF}\sx =\sx -N_1^3\sx +\sx (N\sx -\sx \frac{21}{4})N_1^2\sx +\sx (6N\sx +\sx \frac{3}{2})N_1\sx +\sx \frac{7}{4}\sx +\sx 3N.
\end{aligned}
\end{equation}

For the function $f_2(N_1)$, its local maximum solution in $N_1\in[0, N]$ is obtained
at the stationary point $N_1^*\triangleq N_{11}=\frac{4N+\sqrt{(4N+15)^2+288}-21}{12}$ by using the same method as $f_1(N_1)$.
Further, $N_1^*=\lceil\frac{4N+\sqrt{(4N+15)^2+288}-21}{12}\rfloor$ due to $N_1^*$ representing the number of sensors.
And the maximum DOF is $-(N_1^*)^3+(N-\frac{21}{4})(N_1^*)^2+(6N+\frac{3}{2})N_1^*+\frac{7}{4}+3N$.

\section{The Proof of Lemma 6}
It follows from Eq. (\ref{to9}) that the DOF of the TO-SDA(TNA-II) is discussed as the following two cases.

\textbf{case 1: $\mathbf{9\leq N\leq14}$.}

When $9\leq N\leq14$, $5\leq N_1\leq9$ can be obtained.
Further, the specific structure of $\mathcal{G}$ can be got.
Additionally, the longest consecutive segments of the second-order SCA and third-order SCA are $2(M_1(M_2+1)+J)-2$ and $3(M_1(M_2+1)+J)-2$.
Therefore, $\lambda_1=2(M_1(M_2+1)+J)-2$ and $\lambda_2=3(M_1(M_2+1)+J)-2$, and the DOF of TO-SDA(TNA-II) can be calculated as
\begin{equation}\nonumber
\text{DOF}=2[(5+4N_2)(M_1(M_2+1)+J)]-6N_2-5.
\end{equation}

\textbf{case 2: $\mathbf{N\leq8}$ and $\mathbf{N\geq15}$.}

When $N\leq8$ and $N\geq15$, $N_1\leq5$ and $N_1\geq10$ can be obtained.
Further, the specific structure of $\mathcal{G}$ can be got.
Additionally, the longest consecutive segments of the second-order SCA and third-order SCA are $2(M_1(M_2+1)+J)$ and $3(M_1(M_2+1)+J)$.
Therefore, $\lambda_1=2(M_1(M_2+1)+J)$ and $\lambda_2=3(M_1(M_2+1)+J)$, and the DOF of TO-SDA(TNA-II) can be calculated as
\begin{equation}\nonumber
\begin{aligned}
\text{DOF}&=2[(4N_2+1)(M_1(M_2+1)+J)+(N_2-1)\\
&+4(M_1(M_2+1))+4J]+1.
\end{aligned}
\end{equation}\\[-7pt]
Furthermore, according to \cite{WangY2024}, $J=\lceil \frac{N_1}{2} \rceil-1$, and the total number of sensors is\\[-8pt]
\begin{equation}\nonumber
N=M_1+M_2+N_2.
\end{equation}\\[-14pt]
Therefore, to obtain the maximum DOF of TO-SDA(TNA-II), it is equivalent to solve the following optimization problem
\begin{equation}\nonumber
\begin{aligned}
&\underset{M_1,M_2,N_2\in \mathbb{N}_+}{maximize} 2[(4N_2+1)(M_1(M_2+1)+J)+(N_2-1)\\
&\ \ \ \ \ \ \ \ \ \ \ \ \ \ \ \ +4(M_1(M_2+1))+4J]+1,\\
&subject\ to\ M_1+M_2+N_2=N. \ \ \ \ \ \ \ \ \ \ \ \ \ \ \ \ \ \ \ \ \ \ \ \ \ (P3)
\end{aligned}
\end{equation}

Firstly, assuming $N_1$ for $\mathcal{G}$ is known, to obtain the maximum DOF $\mathcal{G}$ based on SCA,
variables $M_1$ and $M_2$ are set as \cite{WangY2024}
\begin{equation}\nonumber
M_1=N_1-\lceil\frac{2N_1-1}{4}\rfloor, M_2=\lceil\frac{2N_1-1}{4}\rfloor.
\end{equation}

Further, substituting $M_1$, $M_2$ and $N_2=N-N_1$ into objective function, we can get
\begin{equation}
\label{to10}
\begin{aligned}
\text{DOF}&=2[(4(N \sx - \sx N_1)\sx +\sx 1)((N_1\sx -\sx \lceil\frac{2N_1\sx -\sx 1}{4}\rfloor)(\lceil\frac{2N_1\sx -\sx 1}{4}\rfloor\sx +\sx 1)\\
&\sx +\sx \lceil \frac{N_1}{2} \rceil\sx -\sx 1)+(N\sx -\sx N_1\sx -\sx 1)\sx +\sx 4((N_1\sx -\sx \lceil\frac{2N_1\sx -\sx 1}{4}\rfloor)\\
&\times(\lceil\frac{2N_1\sx -\sx 1}{4}\rfloor\sx +\sx 1))\sx +\sx 4(\lceil \frac{N_1}{2} \rceil\sx -\sx 1)]\sx +\sx 1,\\
\end{aligned}
\end{equation}

Although (P3) is an problem with no general closed-form solution,
the maximum possible DOF maybe found under the relaxation that $M_1,M_2,N_2\in \mathbb{R}_+$.
At this time, (\ref{to10}) can be changed as follows form
\begin{equation}\nonumber
\begin{aligned}
f_3(N_1)&\triangleq\text{DOF}=-2N_1^3+(2N+\frac{3}{2})N_1^2+(\frac{9}{2}-4N)N_1\\
&\ \ \ \ \ \ \ \ \ \ +4N^2-\frac{3}{2}N-\frac{47}{8}.
\end{aligned}
\end{equation}

For the function $f_3(N_1)$, when the total number of physical sensors $N$ is given, the unknown parameter is only $N_1$,
which is a cubic function $f_3(N_1)$ about $N_1$.
There may be a local maximum for $f_3(N_1)$ within the independent variable interval $[0, N]$.
To discuss the local maximum of $f_3(N_1)$ and obtain the stationary point of $f_3(N_1)$,
the first derivative of $f_3(N_1)$ is derived firstly as follows
\begin{equation}\nonumber
\frac{\partial f_3(N_1)}{\partial N_1}=-6N_1^2+(3+4N)N_1+\frac{9}{2}-4N.
\end{equation}

Therefore, the stationary points of $f_3(N_1)$ can be derived as follows when $\frac{\partial f_3(N_1)}{\partial N_1}=0$
\begin{equation}\nonumber
\begin{aligned}
&N_{11},N_{12}=\frac{3+4N\pm\sqrt{(4N-9)^2+36}}{12}.
\end{aligned}
\end{equation}

Since $N_{11}$ and $N_{12}$ represent the number of sensors, that means they are greater than zero.
According to the properties of quadratic functions,
$\frac{3+4N-\sqrt{(4N-9)^2+36}}{12}<0$,
which is not satisfied the condition,
thus only $N_{11}$ is considered for the local maximum of $f_3(N_1)$.

Furthermore, since the first derivative $\frac{\partial f_3(N_1)}{\partial N_1}$ is greater than zero at $[0, N_1]$,
$f_3(N_1)$ strictly monotonically increases at $[0, N_1]$.
On the contrary, since the first derivative $\frac{\partial f(N_1)}{\partial N_1}$ is less than zero at $[N_1,N]$,
$f_3(N_1)$ strictly monotonically decreases at $[N_1,N]$.
Therefore, $f_3(N_1)$ can obtain the maximum value at the stationary point $N_1^*\triangleq \frac{3+4N+\sqrt{(4N-9)^2+36}}{12}$ for the interval $[0,N]$.
Additionally, $N_1^*=\lceil\frac{3+4N+\sqrt{(4N-9)^2+36}}{12}\rfloor$ due to $N_1^*$ representing the number of sensors.
And the maximum DOF is $-2(N_1^*)^3+(2N+\frac{3}{2})(N_1^*)^2+(\frac{9}{2}-4N)N_1^*+4N^2-\frac{3}{2}N-\frac{47}{8}$.

\section{The Proof of Corollary 1}
In order to further investigate the redundancy of TO-SDA(CNA),
the relaxed problem of the redundancy as follows is studied
\begin{equation}\nonumber
\begin{aligned}
&N_1^*=\frac{4N+\sqrt{(4N+15)^2+672}-21}{12},\\
&U_1\sx =\sx (-(N_1^*)^3\sx +\sx (N-\frac{21}{4})(N_1^*)^2 \sx +\sx (6N+\frac{19}{2})N_1^*\sx -\sx \frac{17}{4}\sx -\sx 5N\sx -\sx 1)/2,\\
&R_T^1(N)=\frac{\frac{2}{3}N^3+\frac{N^2}{2}-\frac{N}{6}}{U_1}.
\end{aligned}
\end{equation}

And the upper and lower bounds of redundancy of TO-SDA(CNA) when the number of physical sensors
varies from 2 to infinity is analyzed in detail as follows.

Firstly, when there are two or three physical sensors for TO-SDA, the redundancies of TO-SDA(CNA) is
\begin{equation}\nonumber
R_T^1(2)=7,\ \ \ \  R_T^1(3)=2.4789.
\end{equation}

When the number of physical sensors $N\rightarrow \infty$, the redundancy of TO-SDA(CNA) is
\begin{equation}\nonumber
\begin{aligned}
\lim_{N\rightarrow \infty} R_T^1(N)&=\lim_{N\rightarrow \infty}\frac{\frac{2}{3}N^3+\frac{N^2}{2}-\frac{N}{6}}{U_1}=9.\\
\end{aligned}
\end{equation}

Further the monotonicity of $R_T^1(N)$ is analyzed, it can be concluded that within $N\in[2,+\infty)$, $R_T^1(N)$ increases monotonically.
Therefore, the upper and lower bounds of the redundancy $R_T^1$ with the number of physical sensors varying from 2 to infinity are
\begin{equation}\nonumber
\begin{aligned}
2.4789\leq R_T^1\leq 9.
\end{aligned}
\end{equation}

Similarly, it can be proven that the upper and lower bounds of redundancies for TO-SDA(SCAN) and TO-SDA(TNA-II) are shown as Corollary 1.

}

\end{document}